\definecolor{darkWhite}{rgb}{0.94,0.94,0.94}
\newcolumntype{C}{>{\centering\arraybackslash}X}
\tiny\color{black},
\renewcommand\qed{$\blacksquare$}
\renewcommand{\textbf}[1]{\begingroup\bfseries\mathversion{bold}#1\endgroup} 
\newcommand{\keywords}[1]{\par\addvspace\baselineskip
\noindent\keywordname\enspace\ignorespaces#1}
\begin{document}
\title{\LARGE{Fixed-Point Code Synthesis for Neural Networks\thanks{This work is supported by La  Region  Occitanie  under  Grant  GRAINE  -  SYFI. https://www.laregion.fr}}}
\author{\large{Hanane Benmaghnia\inst{1}  \and Matthieu Martel\inst{1,2} \and Yassamine Seladji \inst{3}}}
\institute{\large{University of Perpignan Via Domitia, Perpignan, France
  \and
Numalis, Cap Omega, Rond-point Benjamin Franklin 34960 Montpellier, France
\and
University of Tlemcen Aboubekr Belkaid, Tlemcen, Algeria\\
\inst{1}\email{\{first.last\}@univ-perp.fr}, \inst{3}\email{yassamine.seladji@univ-tlemcen.dz}}}
\maketitle
\thispagestyle{firstpage}
\begin{abstract}
Over the last few years, neural networks have started penetrating safety critical systems 
to take decisions in  robots, rockets, autonomous driving car, etc.
A problem is that these critical systems often have limited computing resources. Often, they use the fixed-point arithmetic for its many advantages (rapidity, compatibility with small memory devices.)
In this article, a new technique is introduced to tune the formats (precision) of already trained neural networks using fixed-point arithmetic, which can be implemented using integer operations only. The new optimized neural network computes the output with fixed-point numbers without modifying the accuracy up to a threshold fixed by the user. A fixed-point code is synthesized for the new optimized neural network ensuring the respect of the threshold for any input vector belonging the range $[x_{min},\, x_{max}]$ determined during the analysis. From a technical point of view, we do a preliminary analysis of our floating neural network to determine the worst cases, then we generate a system of linear constraints among integer variables that we can solve by linear programming. The solution of this system is the new fixed-point format of each neuron. The experimental results obtained show the efficiency of our method which can ensure that the new fixed-point neural network has the same behavior as the initial floating-point neural network.
\keywords{Computer Arithmetic, Code Synthesis, Formal Methods, Linear Programming, Numerical Accuracy, Static Analysis.}
\end{abstract}

\newcommand{\x}[1]
{
\hat{#1}
}

\section{Introduction}

Nowadays, neural networks have become increasingly popular. They have started penetrating safety critical domains and embedded systems, in which they are often taking important decisions such as autonomous driving cars, rockets, robots, etc.
These neural networks become larger and larger while embedded systems still have limited resources (memory, CPU, etc.)
As a consequence, using and running deep neural networks \cite{sun2018testing} on embedded systems with limited resources introduces several new challenges \cite{GSS19,JinDLTTC19,enderich2019fix,ChenHZX17,LinTA16, joseph2020programmable, LauterV20,HanZZWL19}.
The fixed-point arithmetic is more adapted for these embedded systems which often have a working processor with integers only.
The approach developed in this article concerns the fixed-point and integer arithmetic applied to trained neural networks (NNs).
NNs are trained on computers with a powerful computing unit using most of the time the IEEE754 floating-point arithmetic~\cite{IEEE,martel2017floating}. 
Exporting NNs using fixed-point arithmetic can perturb or change the answer of the NNs which are in general sensible to the computer arithmetic. 
A new approach is required to adapt NN computations to the simpler CPUs of embedded systems. This method consists in using fixed-point  arithmetic because it is faster and lighter to manipulate for a CPU while it is more complicated to handle for the developer.
We consider the problem of tuning the formats (precision) of an already trained floating-point NN, in such a way that, after tuning, the synthesized fixed-point NN behaves almost like the original performing computations. More precisely, if the NN is an interpolator, i.e. NNs computing mathematical functions, the original NN (floating-point) and the new NN (fixed-point) must behave identically if they calculate a given function $f$, such that, the absolute error (Equation~(\ref{er})) between the numerical results computed by both of them is equal to or less than a threshold set by the user. If the NN is a classifier, the new NN have to classify correctly the outputs in the right category comparing to the original NN. This method is developed in order to synthesize NNs fixed-point codes using integers only.
This article contains nine sections and an introductory example in Section~\ref{s2}, where we present our method in a simplified and intuitive way. Some notations are introduced in Section~\ref{s10}. In Section~\ref{s3}, we present the fixed-point arithmetic, where we show how to represent a fixed-point number and the elementary operations.
The errors of computations and conversions inside a NN are introduced in Section~\ref{s4}. Section~\ref{s5} deals with the generation of constraints to compute the optimal format for each neuron using linear programming \cite{welke2020ml2r}. Our tool and its features are presented in Section \ref{s9}. Finally, we demonstrate the experimental results in Section~\ref{s6} in terms of accuracy and bits saved. Section~\ref{s7} presents the related work then Section~\ref{s8} concludes and gives an overview of our future work.
\section{Notations}
\label{s10}
In the following sections, we will use these notations: \\
\textbf{$\bullet$}$\mathbb{\aleph}$: Set of fixed-point numbers. \textbf{$\bullet$}$\mathbb{N}$: Set of natural integers. \textbf{$\bullet$}$\mathbb{Z}$: Set of relative integers. \textbf{$\bullet$}$\mathbb{R}$:~Set of real numbers. \textbf{$\bullet$}$\mathbb{F}$: Set of IEEE754 floating-point numbers \cite{IEEE}. \textbf{$\bullet$NN}: Neural Network.
 \textbf{$\bullet<M^{\x{x}}$}, \textbf{$L^{\x{x}}>$}: Format of the fixed-point number $\x{x}$ where \textbf{$M^{\x{x}}$} represents the Most significant bit (integer part) and \textbf{$L^{\x{x}}$} the Least significant bit (fractional part). \textbf{$\bullet \epsilon_{{\x{x}}}$}: Error on the fixed-point number ${\x{x}}$. \textbf{$\bullet b$}: Bias. \textbf{$\bullet W$}: Matrix of weights. \textbf{$\bullet m$}:
  Number of layers of a neural network. \textbf{$\bullet n$}: Number of neurons by layer. \textbf{$\bullet k$}: Index of layer. \textbf{$\bullet i$}: Index of neuron. \textbf{$\bullet$ufp}: Unit in the first place \cite{martel2017floating,IEEE}. \textbf{$\bullet$} \textbf{ReLU}: Rectified Linear Unit~\cite{AI2, sharma2017activation,abs-2104-02466}. \textbf{$\bullet T$}: size of data types ($8$, $16$, $32$ bits.) \textbf{$\bullet \oplus$}: Fixed-point addition. \textbf{$\bullet \otimes$}: Fixed-point multiplication.
  \section{An Introductory Example}
\label{s2}
In this section, we present a short example of a fully connected neural network \cite{albawi2017understanding} containing three layers ($m=3$) and two neurons by layer ($n=2$) as shown in Figure~\ref{fig1}. The objective is to give an intuition of our approach.

Our main goal is to synthesize a fixed-point code for an input NN with an error threshold between $0$ and $1$ defined by the user, and respecting the initial NN which uses the floating-point arithmetic \cite{IEEE,martel2017floating}.
The error threshold is the maximal absolute error accepted between the original floating-point NN and the synthesized fixed-point NN in all the outputs of the output layer (max norm). This absolute error is computed by substracting the fixed-point value to the floating-point value (IEEE754 with single precision \cite{IEEE,martel2017floating}) as defined in Equation~\eqref{er}.
To compute this error, we convert the fixed-point value into a floating-point value.
\begin{equation}
\label{er}
Absolute \, error= |FloatingPoint \, Result - FixedPoint \, Result|
\end{equation}
In this example, the threshold is $0.02$ and the data type $T=32$ bits. In other words, the resulting error of all neurons in the output layer ($u_{30}$, $u_{31}$ of Figure \ref{fig1}) must be equal to or less than $0.02$ using integers in $32$ bits.
\def\layersep{2.6cm}
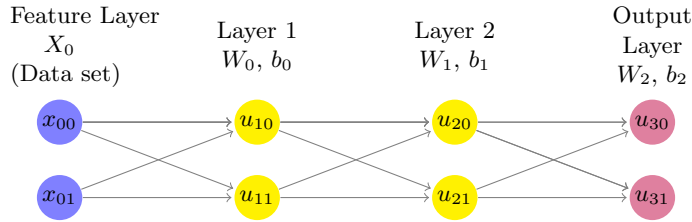
\begin{figure}[h]
\centering
\begin{center}
\begin{tikzpicture}[shorten >=0.95pt,->,draw=black!50, node distance=1*\layersep]

    \tikzstyle{every pin edge}=[<-,shorten <=0.95pt]
    \tikzstyle{neuron}=[circle,fill=black!25,minimum size=17pt,inner sep=0.0pt]
    \tikzstyle{input neuron}=[neuron, fill=blue!50];
    \tikzstyle{output neuron}=[neuron, fill=purple!50];
    \tikzstyle{hidden neuron}=[neuron, fill=yellow];
    \tikzstyle{annot} = [text width=4em, text centered]

\foreach \name / \y in {0,...,1}
    \node[input neuron] (I-\name) at (0,-\y) {$x_{0\y}$};
  
\foreach \name / \y in {0,...,1}
        \path[yshift=0cm]
            node[hidden neuron] (A-\name) at (\layersep,-\y cm) {$u_{1\y}$};

\foreach \name / \y in {0,...,1}
        \path[yshift=0cm]
            node[hidden neuron] (B-\name) at (2*\layersep,-\y cm) {$u_{2\y}$};
\node[output neuron] (O) at (\layersep+\layersep+\layersep, -0.0) {$u_{30}$};

\node[output neuron] (1) at (\layersep+\layersep+\layersep, -1.0) {$u_{31}$};

    \foreach \source in {0,...,0}
    \foreach \dest in {0,...,0}
        \path (I-\source) edge node[above]{} (A-\dest);
        
    \foreach \source in {1,...,1}
    \foreach \dest in {1,...,1}
        \path (I-\source) edge node[above]{} (A-\dest);  
       
     \foreach \source in {0,...,1}
    \foreach \dest in {0,...,0}
        \path (I-\source) edge node[above]{} (A-\dest);
     \foreach \source in {0,...,0}
    \foreach \dest in {0,...,1}
        \path (I-\source) edge node[above]{} (A-\dest);
                    
        \foreach \source in {0,...,0}
        \foreach \dest in {0,...,0}
            \path (A-\source) edge node[above]{}(B-\dest);
            
        \foreach \source in {1,...,1}
        \foreach \dest in {1,...,1}
            \path (A-\source) edge node[above]{}(B-\dest);
         \foreach \source in {0,...,0}
        \foreach \dest in {0,...,1}
            \path (A-\source) edge node[above]{}(B-\dest);      
            
              \foreach \source in {0,...,1}
        \foreach \dest in {0,...,0}
            \path (A-\source) edge node[above]{}(B-\dest);   
    \foreach \source in {0,...,0}
    \foreach \dest in {0,...,0}
        \path (B-\source) edge node[above]{ } (O);
    \foreach \source in {0,...,1}
    \foreach \dest in {0,...,0}
        \path (B-\source) edge node[above]{ } (O);
        \foreach \source in {0,...,0}
    \foreach \dest in {0,...,1}
        \path (B-\source) edge  (1);
                
      \foreach \source in {1,...,1}
    \foreach \dest in {1,...,1}
        \path (B-\source) edge node[above]{ } (1);
        

\node[annot, above of=A-0, node distance=1cm] (hl1) {Layer 1\\ $W_0, \, b_0$};
\node[annot, left of=hl1] {Feature~Layer $X_0$ (Data~set)};
\node[annot, right of=hl1] (hl2) {Layer 2\\ $W_1, \, b_1$};
\node[annot, right of=hl2] {Output Layer \\ $W_2, \, b_2$};
\end{tikzpicture}
\end{center}
\caption{A fully connected NN with 3 layers and 2 neurons per layer.}
\label{fig1}
\end{figure}
\newcommand*{\Vecteur}[3]{%
  \ensuremath{{#1}=\, 
    \begin{pmatrix} 
      #2\\ 
      #3 
    \end{pmatrix}}}

\newcommand*{\Matrice}[5]{%
\ensuremath{{#1}=\,
 \begin{pmatrix} 
 #2 & #3 \\ 
 #4 & #5 
 \end{pmatrix} }}

Hereafter, we consider the feature layer $X_0$, which corresponds to the input vector.
The biases are $b_0, \, b_1$ and $b_2$. The matrices  of weights are $W_0, \, W_1$ and $W_2$, such that each bias (respectively matrix of weights) corresponds to one layer.\\
The affine function for the $k^{th}$ layer is defined in a standard way as 
 \begin{small}
\begin{equation}
\label{f}
\begin{array}[t]{lrcl}
f_{k,i} : & \mathbb{F}^n &\longrightarrow  \mathbb{F}  &\\
    & X_{k-1} & \longmapsto u_{k,i}&=f_{k,i}(X_{k-1}) =\displaystyle \sum_{j=0}^{n-1} (w_{k-1,i,j}\times   x_{k-1,j})+  b_{k-1,i},   
\end{array}
\end{equation}
 \end{small}
$ \forall 1 \leq k \leq m$, $\forall 0 \leq i < n$, where $X_{k-1}$= $(x_{k-1,0},..., \,x_{k-1,n-1})^t$ is the input vector (the input of the layer $k$ is the output of the layer $k-1$), $b_{k-1}$~=~ $(b_{k-1,0},..., \, b_{k-1,n-1})^t$~$\in~ \mathbb{F}^n$ and $W_{k-1} \in \mathbb{F}^{n \times n}$  
($w_{k-1,i,j}$ is the coefficient in the line $i$ and column $j$ in $W_{k-1}$.) \\
Informally, a fixed-point number is represented by an $integer \, value$ and a $format$ $<M,L>$ which gives us the information about the number $M$ of significant bits before the binary point and $L$ the number of significant bits after the binary point required for each coefficient in $W_{k-1}$, $b_{k-1}$, $X_0$, and the output of each neuron $u_{k,i}$. We notice that, at the beginning, we convert the input vector $X_0$ in the size of the data type required $T$.
In fixed-point arithmetic, before computing the affine function defined in Equation~(\ref{f}), we need to know the optimal formats $<M,L>$ of all the coefficients. 
To compute these formats, we generate automatically linear constraints according to the given threshold. These linear constraints formally defined in Section \ref{s5} are solved by linear programming \cite{welke2020ml2r}, and they give us the optimal value of the number of significant bits after the binary point $L$ for each neuron. We show in Equation~(\ref{sys}) some constraints generated for the neuron $u_{31}$ of the NN of Figure~\ref{fig1}.
  \begin{small}
\begin{equation}
\label{sys}
\begin{cases}
 L^{u}_{31} \geq 6 \qquad \qquad \, \, \, \,  \\
 L^{u}_{31} + M^{u}_{31} \leq 31 \\
 L^{u}_{31} \leq L^{x}_{20}, \qquad \, \, \, \, \, \, \\
  L^{u}_{31} \leq L^{x}_{21}, \qquad \quad \\
  ...
 \end{cases}
\end{equation}
  \end{small}
We notice that $L^{x}_{20}$ (respectively $L^{x}_{21}$) is the length of the fractional part of ${u}_{20}$ (respectively~${u}_{21}$.)
The first constraint gives a lower bound for $L^{u}_{31}$, so the output $u_{31}$ in the output layer has to fullfil the threshold fixed by the user and the error done must be equal to or less than this one. In other words, the number of significant bits of each neuron in the output layer must be equal at least to $6$ (if it is greater than $6$, this means that we are more accurate.)
The value $6$ is obtained by computing the unit in the first place \cite{IEEE,martel2017floating} of the threshold defined as
 \begin{small}
\begin{equation}
\label{ufp}
\forall x \in \mathbb{F},\quad \textnormal{ufp}(x)= \textnormal{min}\, \{ i \in \mathbb{N}:\, 2^{i+1}>x\}=\lfloor \textnormal{log}_2(x) \rfloor
\end{equation}
 \end{small}
The second constraint avoids overflow and ensures compliance to the data type chosen by the user (integers on $8$, $16$ or $32$ bits.) The third and fourth constraints ensure that the length of the fractional part $L_{31}^{u}$ computed is less than or equal to the length of the fractional parts of all its inputs  ($L_{20}^{x}$ and  $L_{21}^{x}$.)
Using the formats resulting from the solver, firstly, we convert all the coefficients of weights matrices $W_{k-1}$, biases $b_{k-1}$ and inputs $X_0$ from floating-point values to fixed-point values. 

Let  $b_{k-1}, \, W_{k-1}$ and $ X_0$ used in this example be  
\begin{center}
$\Vecteur{b_0}{-2}{4.5}$, $\Vecteur{b_1}{1.2}{0.5}$, $\Vecteur{b_2}{3}{1}$, $\Matrice{W_0}{3.5}{0.25}{-1.06}{4.1}$, $\Matrice{W_1}{-0.75}{4.85}{2.1}{0.48}$, $\Matrice{W_2}{-5}{12.4}{0.2}{-2}$, $\Vecteur{X_0}{2}{0.5}$.\\
\end{center}
Table \ref{t} presents the output results for each neuron. The floating-point results are shown in the second column and the fixed-point results in the third one. The last column contains the absolute error defined in Equation~\eqref{er} for the output layer ($u_{30}$, $u_{31}$) only.
\newcommand*{\Rouge}[1]{%
\textcolor{red}{#1}}
\newcommand*{\plus}{%
\oplus}
\newcommand*{\mult}{%
\otimes}
\begin{table}[!]
\begin{center}
\caption{\centering Comparison between the floating-point and the fixed-point results corresponding to the NN of Figure~\ref{fig1}.}
\label{t}

\begin{tabular}[c]{|p{1.5cm}|p{3.6cm}|p{4cm}|c|}

\hline \centering \rotatebox{0}{\textbf{ Neuron}} & \centering \rotatebox{0}{\textbf{ Floating-Point Result}} & \centering \rotatebox{0}{\textbf{Fixed-Point Result}} 
& \rotatebox{0}{\textbf{Absolute Error}}  \\
\hline \centering \textbf{$u_{10}$} & \centering 5.125 & \centering 2624$<$3,9$>$= 5.125 & / \\
\hline \centering \textbf{$u_{11}$} & \centering 4.43 & \centering 4535$<$2,10$>$= 4.4287 & /\\
\hline \centering \textbf{$u_{20}$} & \centering 18.8417 & \centering 9643$<$5,9$>$= 18.8339 & /\\
\hline \centering \textbf{$u_{21}$} & \centering 13.3889 & \centering 6854$<$5,9$>$= 13.3867 & /\\
\hline \centering \textbf{$u_{30}$} & \centering 74.8136 & \centering 76620$<$9,10$>$= 74.8247 & \textcolor{red}{$1.06 \times 10^{-2}$ \textbf{$\approx 2^{-7}$}}\\
\hline \centering \textbf{$u_{31}$} & \centering -22.0094 & \centering -22536 $<$7,10$>$= -22.0078 &  \textcolor{red}{$1.63 \times 10^{-3}$ \textbf{$\approx {2^{-10}}$}}\\
\hline
\end{tabular}
\end{center}
\end{table}

 The error threshold fixed by the user at the beginning was 0.02 (6 significant bits after the binary point.)
 As we can see, the absolute error of the output layer in the Table \ref{t} is under the value of the threshold required. This threshold is fulfilled with our method using fixed-point arithmetic. 
Now, we can synthesize a fixed-point code for this NN respecting the user's threshold, the data type $T$, and ensuring the same behavior and quality as the initial floating-point NN.
\begin{small}
\begin{figure}[!h]
\begin{scriptsize}
\centering
	\lstset{numbers=left,numberstyle=\tiny\bfseries,stepnumber=1,firstnumber=1,numberfirstline=true}
	\begin{lstlisting}[language=C++]
	int main()
{	/* That NN has 3 layers and 2 neurons per layer */
 	int mul, u[3][2], x[4][2];	 	 	  

	x[0][0]=1073741824;	 	 	 // <1,29>
	x[0][1]=1073741824;	 	 	 // <-1,31>
	x[0][0]=x[0][0]>>6;	 	 	 // <1,23>
	mul=112*x[0][0];	 	 	    // <3,9>=<1,5>*<1,23>
	mul=mul>>19;
	u[1][0]=mul;	 	 	 	      // <3,9>
	x[0][1]=x[0][1]>>7;	 	 	 // <-1,24>
	mul=16*x[0][1];	 	 		    // <-2,14>=<-2,9>*<-1,24>
	mul=mul>>19;
	mul=mul>>5;
	u[1][0]=u[1][0]+mul;			  //<3,9>=<3,9>+<-2,9>
	u[1][0]=u[1][0]+-1024;			//<3,9>=<3,9>+<1,9>
	u[1][0]=max(0,u[1][0]);		// ReLU(u[1][0])
	x[1][0]=u[1][0];
	x[0][0]=x[0][0]>>3;	 	   // <1,20>
	mul=-543*x[0][0];	 	 	 	 // <2,10>=<0,9>*<1,20>
	mul=mul>>19;
	u[1][1]=mul;	 	 	 	 	    // <2,10>
	x[0][1]=x[0][1]>>5;	 	 	 // <-1,19>
	mul=262*x[0][1];	 	 	 	  // <2,10>=<2,10>*<-1,19>
	mul=mul>>19;
	u[1][1]=u[1][1]+mul;			  //<2,10>=<2,10>+<2,10>
	u[1][1]=u[1][1]+4608;			 //<2,10>=<2,10>+<2,10>
	u[1][1]=max(0,u[1][1]);		// ReLU(u[1][1])
	...
	return 0; }
	\end{lstlisting}
\end{scriptsize}
\caption{\centering Fixed-point code synthesized for the neurons $u_{10}$ and $u_{11}$ of Figure~\ref{fig1} on $32$~bits.}
\label{fig_code}
\end{figure}
\end{small}

Figure~\ref{fig_code} shows some lines of code synthesized by our tool for the neurons $u_{10}$ and $u_{11}$ using Equation \eqref{f} and the fixed-point arithmetic. The running code gives the results shown in the third column of the Table \ref{t}. For example, the line $5$ represents the input $x_{00}=2$ in the fixed-point representation. This value is shifted on the right through $6$~bits (line $7$) in order to be aligned and used in the multiplication (line $8$) by $w_{000}=\,3.5$ represented by $112$ in the fixed-point arithmetic.
The fixed-point output $u_{10}$ ($2624$) in the Table~\ref{t} is returned by the line $16$. 
\newcommand*{\Bleu}[1]{%
\textcolor{blue}{#1}}
\vspace{0.03cm}
\newcommand*{\Olive}[1]{%
\textcolor{OliveGreen}{#1}}
\section{\textbf{Fixed-Point Arithmetic}} 
\label{s3}
In this section, we briefly describe the fixed-point arithmetic as implemented in most digital computers \cite{catrina2010secure,yates2009fixed,bevcvavr2005fixed}. 
Since fixed-point operations rely on integer operations, computing with fixed-point numbers is highly efficient. 
We start by defining the representation of a fixed-point number in Subsection~\ref{4-a}, then we present briefly the operations needed (addition, multiplication and activation functions) in this article in Subsection~\ref{4-b}. 
\subsection{\textbf{Representation of a Fixed-Point Number}}
\label{4-a}
A fixed-point number is represented by an \textbf{integer} value and a \textbf{format} $<M,L>$ where $M \in \mathbb{Z}$ is the number of significant bits before the binary point and $L \in \mathbb{N}$ is the number of significant bits after the binary point. We write the fixed-point number $\x{a}= value_{<M^{\x{a}},\,L^{\x{a}}>}$ and define it in Definition \ref{def1}.
    \begin{definition}
    \label{def1}
Let us consider $\x{a} \in \aleph$, $A_{\hat{a}} \in \mathbb{N}$ such that \\$\x{a} = (-1) ^{s_{\hat{a}}}.A_{\hat{a}}.\beta^{-L_{\hat{a}}}$ and 
$P_{\hat{a}} = M_{\hat{a}} + L_{\hat{a}}+1$, where
$\beta$ is the basis of representation,
\textcolor{black}{$\x{a}$}~is the fixed-point number with implicit scale factor $\beta ^{-L_{\hat{a}}}$ (Figure~\ref{representation}), \textcolor{black}{$A_{\hat{a}}$} is the integer representation of $\x{a}$ in the basis $\beta$, 
\textcolor{black}{$P_{\hat{a}}\in \mathbb{N}$}, $P_{\hat{a}} = M_{\hat{a}} + L_{\hat{a}}+1$ is the length of $\x{a}$ and $s_{\hat{a}}\in \{0, \, 1\}$ is its sign.
\end{definition}
\begin{figure}[!h]
\centering
\includegraphics[scale=0.53]{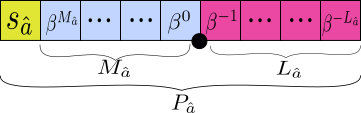}
\caption{Fixed-point representation of $\hat{a}$ in a format $<M^{\x{a}},L^{\x{a}}>$.}
\label{representation}
\end{figure}
The difficulty of the fixed-point representation is managing the position of the binary point manually against the floating-point representation which manages it automatically.
\begin{example}: The fixed-point value $ 3 <1,1>$ corresponds to $1.5$ in the floating-point representation. We have first to write $3$ in binary then we put the binary point at the right place (given by the format) and finally we convert it again into the decimal representation: $3_{10} <1,1> \,=\, 11_{2} <1,1> \,=\, 1.1_{2} = 1.5_{10}$. \end{example}
\subsection{\textbf{Elementary Operations}}
\label{4-b}
This subsection defines the elementary operations needed in this article like addition and multiplication which are used later in Equation~(\ref{e8}). We also define $\x{ReLU}$ (respectively $\x{Linear}$) in fixed-point arithmetic which corresponds to the activation function in some NNs \cite{AI2, sharma2017activation, abs-2104-02466}.
\begin{enumerate}
\item \textbf{Fixed-Point Addition}

Let us consider the two fixed-point numbers $\x{a}$,\, $\x{b}\in \aleph$ and their formats $<~M^{\x{a}},L^{\x{a}}>$, $<~M^{\x{b}},~L^{\x{b}}>$ respectively. Let $\plus$ be the fixed-point addition given by $\x{c} \in \aleph$, $\x{c}= \x{a}\plus \x{b}$.
\begin{figure}[h]
   \begin{minipage}[c]{.4\linewidth}
      \includegraphics[width=7cm]{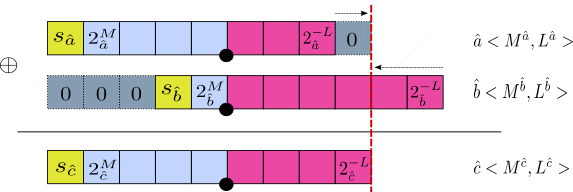}
      \caption{\centering Addition of two fixed-point numbers without a carry.}
\label{add}
   \end{minipage} \hfill
   \begin{minipage}[c]{.46\linewidth}
      \includegraphics[width=7cm]{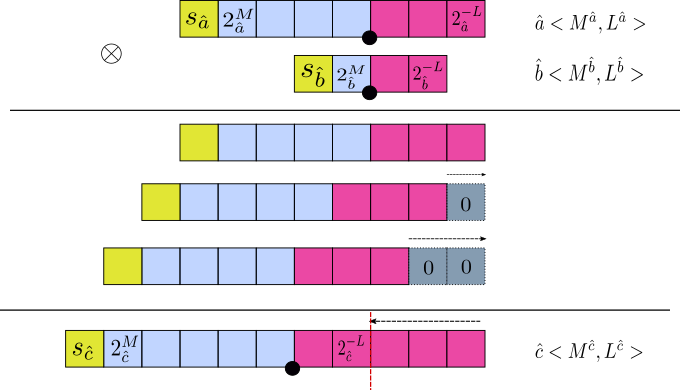}
      \caption{\centering Multiplication of two fixed-point numbers.}
\label{mult}
   \end{minipage}
\end{figure}

Figure~\ref{add} shows the fixed-point addition between $\x{a}$ and $\x{b}$. The fixed-point format required is $<M^{\x{c}},\,L^{\x{c}}>$. The objective is to have a result in this format, this is why we start by aligning the length of the fractional parts $L^{\x{a}}$ and $L^{\x{b}}$ according to $L^{\x{c}}$. If $L^{\x{a}}>L^{\x{c}}$, we truncate with $L^{\x{a}}-L^{\x{c}}$ bits, otherwise, we add $L^{\x{c}}-L^{\x{a}}$ zeros on the right hand of $\x{a}$, then   we do the same for $\x{b}$. The length of the integer part $M^{\x{c}}$ must be the maximum value between $M^{\x{a}}$ and $M^{\x{b}}$. If there is a carry, we add $+1$ to the number of bits of the integer part, otherwise the result is wrong. The algorithm of the fixed-point addition is given in \cite{lopez2014implementation,najahi2014synthesis}.
\item \textbf{Fixed-Point Multiplication}

Let us consider the two fixed-point numbers $\x{a}$,\, $\x{b}\in \aleph$ and their formats $<M^{\x{a}},\,L^{\x{a}}>$~, $<M^{\x{b}},\,L^{\x{b}}>$ respectively. Let $\mult$ be the fixed-point multiplication given by $\x{c} \in \aleph$, $\x{c}= \x{a}\mult \x{b}$.

Figure~\ref{mult} shows the fixed-point multiplication between $\x{a}$ and $\x{b}$. The fixed-point format required is $<M^{\x{c}},\,L^{\x{c}}>$. The objective is to have a result in this format, this is why we start by doing a standard multiplication which is composed by shifts and additions. 
If $(L^{\x{a}}+L^{\x{b}})>L^{\x{c}}$, we truncate with $(L^{\x{a}}+L^{\x{b}})-L^{\x{c}}$ bits, otherwise, we add $L^{\x{c}}-(L^{\x{a}}+L^{\x{b}})$ zeros on the right hand of $\x{c}$. The length of the integer part $M^{\x{c}}$ must be the sum of $M^{\x{a}}$ and $M^{\x{b}}$, otherwise the result is wrong. The algorithm of the fixed-point multiplication is given in \cite{lopez2014implementation,najahi2014synthesis}.
\item \textbf{Fixed-Point $\x{ReLU}$ }

Definition \ref{def2} defines the fixed-point $\x{ReLU}$ which is a non-linear activation function computing the positive values.
\begin{definition}
\label{def2}
Let us consider the fixed-point number $\x{a}=\, V_{\x{a}} \,<M^{\x{a}},L^{\x{a}}>$ $\in \aleph$ and the fixed-point zero written $\x{0}=\,$ $0\, <0,0>$  $\in \aleph$. Let $\x{c}=\, V_{\x{c}} \,<M^{\x{c}},L^{\x{c}}>$ $\in \aleph$ be the result of the fixed-point $\x{ReLU}$ given by 
\begin{small}
\begin{equation}
\label{relu}
\x{c}=\x{ReLU}(\x{a})=\, \x{max}(\x{0},\x{a}), 
\end{equation}
\end{small}
where \begin{small}$V_{\x{c}}=\, max(0,\, V_{\x{a}})$~and $<~M^{\x{c}},~ L^{\x{c}}>=$~$\begin{cases}
  <M^{\x{a}},L^{\x{a}}> \qquad if \quad V_{\x{c}}=\, V_{\x{a}},\\
 <0, \, 0> \quad \, \, \qquad otherwise. \\
 \end{cases}$\end{small}
\end{definition}
\item \textbf{Fixed-Point $\x{Linear}$ }

Definition \ref{def3} defines the fixed-point $\hat{Linear}$ which is an activation function returning the identity value.
\begin{definition}
\label{def3}
Let us consider the fixed-point number $\x{a}$ $\in \aleph$ with the format $<M^{\x{a}},\,L^{\x{a}}>$. Let $\x{c} \in \aleph$ be the result of the fixed-point $\x{Linear}$ activation function given by 
\begin{small}
\begin{equation}
\label{linear}
\x{c}=\x{Linear}(\x{a})=\x{a}.
\end{equation}
\end{small}
\end{definition}
\end{enumerate}
\section{\textbf{Error Modelling}}
\label{s4}
In this section, we introduce some theoretical results concerning the fixed-point arithmetic errors in Subsection \ref{s4_a} and we show the numerical errors done inside a NN in Subsection \ref{s4_b}. 
The error on the output of the fixed-point affine transformation function
 can be decomposed into two parts: the propagation of the input error and the computational error.
Hereafter, $\x{x}\in \aleph$ is used for the fixed-point representation with the format $<M^{\x{x}}, \, L^{\x{x}}>$ and $x\in \mathbb{F}$ for the floating-point representation. $\bar{X}\in \mathbb{F}^n$ is a vector of $n$ floating-point numbers and $\x{X}\in \aleph^n$ a vector of $n$ fixed-point numbers. 
\subsection{\textbf{Fixed-Point Arithmetic Error}}
\label{s4_a}
This subsection defines two important properties about errors made in fixed-point addition and multiplication which are used to compute affine transformations in a NN (substraction and division are useless in our context.) We start by introducing the propositions and then the proofs.
Proposition~\ref{p1} defines the error of the fixed-point addition  when we add two fixed-point numbers and Proposition~\ref{p2} defines the error due to the multiplication of two fixed-point numbers.
\begin{proposition}
\label{p1}
Let $\x{x}, \, \x{y}, \, \x{z}$ $\in$ $\aleph$ with a format $<M^{\x{x}},L^{\x{x}}>$ (respectively $<M^{\hat{y}},L^{\hat{y}}>$, $<M^{\hat{z}},L^{\hat{z}}>$.) Let $x, \, y, \, z\, \in \mathbb{F}$ be the floating-point representation of $\x{x}, \, \x{y}, \, \x{z}$.
Let $\epsilon_{\plus}\in \mathbb{R}$ be the error between the fixed-point addition $\x{z}=\x{x}\oplus\x{y}$ and the floating-point addition $z=x+y$. We have that
\begin{small}
\begin{equation}
\epsilon_{\plus}\leq 2^{-L^{\x{x}}}+2^{-L^{\x{y}}}+2^{-L^{{\x{z}}}}.
\end{equation}
\end{small}
\begin{proof}
Let us consider $\epsilon_{\hat{x}},$ $\epsilon_{\x{y}},$ $ \epsilon_{\x{z}}$ $\in \mathbb{R}$ errors of truncation in the fixed-point representation of $\x{x}$, $\x{y}$ and $\x{z}$ respectively. These ones are bounded by $2^{-L^{\x{x}}}$ ($2^{-L^{\x{y}}}$, $2^{-L^{\x{z}}}$ respectively) because $L^{\x{x}}$ ($L^{\x{y}}$, $L^{\x{z}}$ respectively) is the last correct bit in the fixed-point representation of $\hat{x}$ (respectively $\hat{y}$, $\hat{z}$.)\\
We have that \begin{small} $z=x+y$, $\x{z} = \x{x}\plus\x{y}$ \end{small} and
\begin{small}
$\epsilon_{\plus}\leq\epsilon_{\x{x}}+\epsilon_{\x{y}}+\epsilon_{\x{z}}$.
\end{small}
Then we obtain \begin{small}
$\epsilon_{\x{z}} \leq 2^{-L^{\x{x}}}+2^{-L^{\x{y}}}+2^{-L^{{\x{z}}}}$.~~\qed 
\end{small} 
\end{proof}
\end{proposition}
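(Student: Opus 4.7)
The plan is to decompose the global error $\epsilon_\oplus$ into the three sources that contribute to it and then apply the triangle inequality. Concretely, the quantity $\hat z - z$ measures how far the fixed-point result strays from the exact real sum, and three distinct approximations happen along the way: (i) $x$ is stored as $\hat x$ in format $<M^{\hat x}, L^{\hat x}>$, (ii) $y$ is stored as $\hat y$ in format $<M^{\hat y}, L^{\hat y}>$, and (iii) the exact sum $\hat x + \hat y$ is then written into format $<M^{\hat z}, L^{\hat z}>$, which may require truncating extra fractional bits. I would give each of these a name: $\epsilon_{\hat x} = x - \hat x$, $\epsilon_{\hat y} = y - \hat y$, and $\epsilon_{\hat z} = (\hat x + \hat y) - (\hat x \oplus \hat y)$.

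Next I would justify that each of these errors is bounded by a power of two equal to the weight of the last retained bit. A truncation at the fractional position $L$ discards a tail whose magnitude is strictly less than $2^{-L}$; hence $|\epsilon_{\hat x}| \le 2^{-L^{\hat x}}$, $|\epsilon_{\hat y}| \le 2^{-L^{\hat y}}$, and $|\epsilon_{\hat z}| \le 2^{-L^{\hat z}}$. This step uses exactly the representation of a fixed-point number given in Definition~\ref{def1} and the alignment/truncation behaviour of $\oplus$ described in Subsection~\ref{4-b}.

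Then I would assemble the identity
\begin{equation*}
\hat z - z \;=\; (\hat x \oplus \hat y) - (x + y) \;=\; -\epsilon_{\hat x} - \epsilon_{\hat y} - \epsilon_{\hat z},
\end{equation*}
which follows by adding and subtracting $\hat x + \hat y$ and using the definitions above. Applying the triangle inequality yields
\begin{equation*}
|\epsilon_\oplus| \;\le\; |\epsilon_{\hat x}| + |\epsilon_{\hat y}| + |\epsilon_{\hat z}| \;\le\; 2^{-L^{\hat x}} + 2^{-L^{\hat y}} + 2^{-L^{\hat z}},
\end{equation*}
which is exactly the claim.

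The main subtlety — and the only place the argument could be misread — is step (iii): one must be clear that $\epsilon_{\hat z}$ accounts for the truncation happening inside the $\oplus$ operator when aligning the operands to the target format, not for a separate a priori approximation of $z$. Once this point is settled, the rest is bookkeeping; no assumption about the sign of the operands or about the absence of carries is needed, because the worst case of each truncation is already captured by the bound $2^{-L}$.
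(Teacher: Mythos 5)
Your proposal is correct and follows essentially the same route as the paper: decompose the error into the three truncation errors $\epsilon_{\hat x}$, $\epsilon_{\hat y}$, $\epsilon_{\hat z}$, bound each by $2^{-L}$, and sum. You merely make explicit two steps the paper leaves implicit (the exact identity $\hat z - z = -\epsilon_{\hat x}-\epsilon_{\hat y}-\epsilon_{\hat z}$ and the triangle inequality), and your remark clarifying that $\epsilon_{\hat z}$ is the truncation performed inside $\oplus$ rather than an a priori approximation of $z$ is a welcome precision.
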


\begin{proposition}
\label{p2}
Let $\x{x}, \, \x{y}, \, \x{z}$ $\in$ $\aleph$ with a format $<M^{\x{x}},L^{\x{x}}>$ (respectively $<M^{\hat{y}},L^{\hat{y}}>	,<M^{\hat{z}},L^{\hat{z}}>$.) Let $x, \, y, \, z\, \in \mathbb{F}$ be the floating-point representation of $\x{x}, \, \x{y}, \, \x{z}$ in such a way $x=\, \x{x} \,+ \epsilon_{\x{x}}$ (respectively $y=\, \x{y} \,+ \epsilon_{\x{y}}$, $z=\, \x{z} \,+ \epsilon_{\x{z}}$.) \\
Let $\epsilon_{\mult}\in \mathbb{R}$ be the resulting error between the fixed-point multiplication $\x{z}=\x{x}\otimes\x{y}$ and the floating-point multiplication $z=x\times y$. We have that
\begin{small}
\begin{equation}
\label{e7}
\epsilon_{\mult} \leq \x{y} \times 2^{-L^{\x{x}}}+\x{x} \times 2^{-L^{\x{y}}}+2^{-L^{\x{z}}}.
\end{equation}
\end{small}
\end{proposition}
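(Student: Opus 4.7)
The plan is to mirror the proof of Proposition~\ref{p1}, replacing linearity of addition with bilinearity of multiplication and then accounting for the truncation introduced by $\otimes$. I would start from the exact floating-point product, using the hypotheses $x = \x{x} + \epsilon_{\x{x}}$ and $y = \x{y} + \epsilon_{\x{y}}$, and expand bilinearly:
\begin{equation*}
z = x \times y = (\x{x} + \epsilon_{\x{x}})(\x{y} + \epsilon_{\x{y}}) = \x{x}\,\x{y} + \x{y}\,\epsilon_{\x{x}} + \x{x}\,\epsilon_{\x{y}} + \epsilon_{\x{x}}\epsilon_{\x{y}}.
\end{equation*}
Separately, I would observe that the fixed-point product $\x{z} = \x{x} \otimes \x{y}$ coincides with the exact product $\x{x}\,\x{y}$ up to the rounding incurred when coercing it into the target format $<M^{\x{z}}, L^{\x{z}}>$, as described in Section~\ref{4-b}. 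As in Proposition~\ref{p1}, that truncation error $\epsilon_{\x{z}}$ is bounded by $2^{-L^{\x{z}}}$, since $L^{\x{z}}$ is the position of the last significant fractional bit retained in the output.

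Combining both observations, the global error decomposes as
\begin{equation*}
\epsilon_{\otimes} = |z - \x{z}| \leq \x{y}\,\epsilon_{\x{x}} + \x{x}\,\epsilon_{\x{y}} + \epsilon_{\x{x}}\epsilon_{\x{y}} + \epsilon_{\x{z}}.
\end{equation*}
I would then substitute the truncation bounds $\epsilon_{\x{x}} \leq 2^{-L^{\x{x}}}$, $\epsilon_{\x{y}} \leq 2^{-L^{\x{y}}}$, and $\epsilon_{\x{z}} \leq 2^{-L^{\x{z}}}$ to recover exactly the three first-order terms appearing in Equation~(\ref{e7}).

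The main obstacle is the cross term $\epsilon_{\x{x}}\epsilon_{\x{y}} \leq 2^{-(L^{\x{x}}+L^{\x{y}})}$, which does not appear in the stated bound. I would justify its omission by the standard first-order error analysis: this product is of second order in the truncation errors, and it is dominated by $\x{x}\,\epsilon_{\x{y}}$ (respectively $\x{y}\,\epsilon_{\x{x}}$) as soon as $|\x{x}| \geq 2^{-L^{\x{x}}}$ (respectively $|\x{y}| \geq 2^{-L^{\x{y}}}$), which holds for any nonzero representable operand. A strict version of the statement would either keep this second-order term explicitly or qualify the bound as valid \emph{to first order}; when writing the formal proof I would flag this convention to make the approximation transparent.
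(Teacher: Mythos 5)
Your proposal is correct and takes essentially the same route as the paper: a bilinear expansion of $(\hat{x}+\epsilon_{\hat{x}})(\hat{y}+\epsilon_{\hat{y}})$, an output truncation term bounded by $2^{-L^{\hat{z}}}$, and the dropping of the second-order cross term $\epsilon_{\hat{x}}\epsilon_{\hat{y}}$. The paper likewise discards that term as negligible (on the grounds that only the most significant bit of the error is used later in the constraint generation), so your explicit flagging of this first-order convention is, if anything, more careful than the original.
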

\begin{proof}
Let us consider $\epsilon_{\hat{x}},$ $\epsilon_{\x{y}},$ $ \epsilon_{\x{z}}$ $\in \mathbb{R}$ errors of truncation of $\x{x}$, $\x{y}$ and $\x{z}$ respectively. These ones are bounded by $2^{-L^{\x{x}}}$ ($2^{-L^{\x{y}}}$, $2^{-L^{\x{z}}}$ respectively) because $L^{\x{x}}$ ($L^{\x{y}}$, $L^{\x{z}}$ respectively) is the last correct bit in the fixed-point representation of $\hat{x}$ (respectively $\hat{y}$, $\hat{z}$.)\\
We have that $z=x\times y$, $\x{z}=\x{x} \mult \x{y}$. We compute ($\x{x} \,+ \epsilon_{\x{x}}$) $\times$ ($\x{y} \,+ \epsilon_{\x{y}}$) and then we obtain 
\begin{small}
$\epsilon_{\mult}\leq{\x{y}} \times \epsilon_{\x{x}}+{\x{x}} \times \epsilon_{\x{y}}+\epsilon_{\x{x}} \times \epsilon_{\x{y}}+\epsilon_{\x{z}}$.
\end{small}
We get rid of the second order error $\epsilon_{\x{x}} \times \epsilon_{\x{y}}$ which is negligible in practice because our method needs to know only the most significant bit of the error which will be used in Equation~(\ref{c9}) in Section \ref{s5}. Now, the error becomes
\begin{small}
\begin{equation}
\label{e4}
\epsilon_{\mult}\leq{\x{y}} \times \epsilon_{\x{x}}+{\x{x}} \times \epsilon_{\x{y}}+\epsilon_{\x{z}}.
\end{equation}
\end{small}
Finally, we obtain
\begin{small}
$\epsilon_{\mult}\leq \x{y} \times 2^{-L^{\x{x}}}+\x{x} \times 2^{-L^{\x{y}}}+2^{-L^{\x{z}}}$. \hfill \qed		
\end{small}
\end{proof}

\subsection{\textbf{Neural Network Error}}
\label{s4_b}
Theoretical results about numerical errors inside a fully connected NN using fixed-point arithmetic are shown in this subsection. There are two types of errors: round off errors due to the computation of the affine function in Equation~\eqref{e8} and the propagation of the error of the input vector.

In a NN with fully connected layers \cite{albawi2017understanding},$\forall \, \bar{b} \in \mathbb{F}^n$, $\forall\, W \in \mathbb{F}^{n\times m}$, an output vector $\bar{u} \in \mathbb{F}^n$ is defined~as
\begin{small}
\begin{align}
\label{fx}
\begin{array}[t]{lrcl}
f : & \hspace{0.5cm} \mathbb{F}^m & \longrightarrow & \mathbb{F}^n \\
    &\bar{X} & \longmapsto & \bar{u}=f(\bar{X})=W. \bar{X}+\bar{b} \end{array}
\end{align}
\end{small}
 Proposition~\ref{p3} shows how to bound the numerical errors of Equation~(\ref{fx}) using fixed-point arithmetic.
\begin{figure}[!h]
\begin{center}
\includegraphics[scale=0.52]{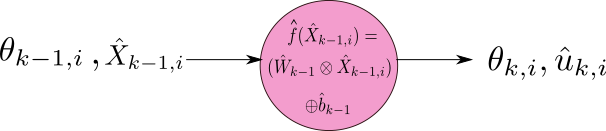}
\end{center}
\caption{Representation of $\hat{u}_{k,i}$ the $i^{th}$ neuron of the $k^{th}$ layer.}
\label{fig2}
\end{figure}
\begin{proposition}
\label{p3}
Let us consider the affine transformation as defined in Equation~(\ref{e8}). It represents the fixed-point version of Equation~\eqref{fx}. This transformation corresponds to what is computed inside a neuron (see Figure~\ref{fig2}.)
Let $\hat{u}_{k,i} \in \aleph$ be the fixed-point representation of $u_{k,i}$ and $\theta_{k,i}\in \mathbb{R}$ the error due to computations and conversions of the floating-point coefficients to fixed-point coefficients such that
\begin{small}
\begin{align}
\label{e8}
\begin{array}[t]{lrcl}
\x{f} : & \mathbb{\aleph}^n & \longrightarrow & \mathbb{\aleph} \\
    & \x{X}_{k-1} & \longmapsto & \x{u}_{k,i}=\hat{f}(\x{X}_{k-1}). \end{array}
\end{align}
where $\x{f}(\x{X}_{k-1})=\displaystyle\sum^{n-1}_{j=0} (\hat{w}_{k-1,i,j}\mult \hat{x}_{k-1,j})\plus \hat{b}_{k-1,i}$ and $\hat{X}_{k-1}=(\x{x}_{k-1,0},..., \, \x{x}_{k-1,n-1})^t$, $\, 1~\leq~k~\leq~m, \,~0 ~\leq ~i ~< ~n$.\\
\end{small}
Then the resulting error $\theta_{k,i}$  for each neuron $\hat{u}_{k,i}$ of each layer is given~by
\begin{small}
\begin{center}
\begin{equation}
\label{e10}
1 \leq k < m+1,\hspace{0.2cm} 0 \leq i < n, \quad\theta_{k,i} \leq \sum_{j=0}^{n-1}(2^{M^{\hat{x}}_{k-1,j}-L^{\hat{w}}_{k-1,i,j}}+2^{M^{\hat{w}}_{k-1,i,j}-L^{\hat{x}}_{k-1,j}}) + n\times2^{-L^{\hat{u}}_{k,i}}+2^{-L^{\hat{u}}_{k,i}+1}. 
\end{equation}
\end{center}
\end{small}
\end{proposition}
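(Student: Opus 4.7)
The plan is to treat the affine transformation computed inside a single neuron as a sequence of $n$ fixed-point multiplications followed by $n$ fixed-point additions (the $n-1$ accumulations of the products plus the addition of the bias), and to stack the per-operation bounds supplied by Propositions~\ref{p1} and~\ref{p2}. The two sources of error identified in the text (propagation of input errors, and computational round-off) appear naturally: the two $2^{M-L}$ families of terms come from the multiplications, while the $2^{-L^{\hat{u}}_{k,i}}$ terms come from the storage of partial sums in the accumulator format.

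First, I would apply Proposition~\ref{p2} to every product $\hat{p}_j = \hat{w}_{k-1,i,j} \otimes \hat{x}_{k-1,j}$, obtaining an error bound
\begin{small}
\begin{equation*}
\epsilon_{\hat{p}_j} \leq \hat{x}_{k-1,j}\cdot 2^{-L^{\hat{w}}_{k-1,i,j}} + \hat{w}_{k-1,i,j}\cdot 2^{-L^{\hat{x}}_{k-1,j}} + 2^{-L^{\hat{p}_j}}.
\end{equation*}
\end{small}
Using the majoration $|\hat{v}| \leq 2^{M^{\hat{v}}}$ that follows from the MSB convention of Definition~\ref{def1} (consistent with the $\textnormal{ufp}$ given in Equation~\eqref{ufp}), I replace $\hat{x}_{k-1,j}$ and $\hat{w}_{k-1,i,j}$ by their MSB majorants, which yields the $j$-th contribution $2^{M^{\hat{x}}_{k-1,j}-L^{\hat{w}}_{k-1,i,j}} + 2^{M^{\hat{w}}_{k-1,i,j}-L^{\hat{x}}_{k-1,j}} + 2^{-L^{\hat{p}_j}}$. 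Summing over $j=0,\ldots,n-1$ produces the double sum that forms the first block of the announced inequality.

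Second, I would close the argument by applying Proposition~\ref{p1} to the successive additions that collapse $\hat{p}_0,\ldots,\hat{p}_{n-1}$ and $\hat{b}_{k-1,i}$ into the neuron output $\hat{u}_{k,i}$. Assuming (as is done implicitly throughout Section~\ref{s5}) that every partial sum is stored in the accumulator format $L^{\hat{u}}_{k,i}$, each of the $n$ storage slots on the product side contributes a $2^{-L^{\hat{u}}_{k,i}}$ term, which replaces the $2^{-L^{\hat{p}_j}}$ slack above; collecting these gives the $n\cdot 2^{-L^{\hat{u}}_{k,i}}$ summand. The remaining $2^{-L^{\hat{u}}_{k,i}+1} = 2\cdot 2^{-L^{\hat{u}}_{k,i}}$ absorbs the truncation error of the bias $\hat{b}_{k-1,i}$ (upper-bounded by $2^{-L^{\hat{u}}_{k,i}}$ after alignment) plus the truncation error of the final accumulator storage in the output format.

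The main obstacle will be the bookkeeping rather than any deep inequality: one has to commit to a convention for the intermediate formats and then verify that the errors emitted by Propositions~\ref{p1} and~\ref{p2} telescope cleanly into the announced bound. In particular, one must justify dropping the $\epsilon_{\hat{w}}\cdot\epsilon_{\hat{x}}$ second-order terms (as was already done in the proof of Proposition~\ref{p2}), and one must observe that no additional term is required for the activation function, since $\hat{\textnormal{ReLU}}$ either leaves its argument unchanged (Definition~\ref{def2}, first branch) or maps it exactly to $\hat{0}$ (no arithmetic error), and $\hat{\textnormal{Linear}}$ is the identity by Definition~\ref{def3}.
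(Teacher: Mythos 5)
Your proposal follows essentially the same route as the paper's proof: apply Proposition~\ref{p2} to each product with the majoration $|\hat{v}|\leq 2^{M^{\hat{v}}}$, sum the $n$ contributions to collect the double sum plus $n\times 2^{-L^{\hat{u}}_{k,i}}$ from the accumulator truncations, and finish with Proposition~\ref{p1} for the bias addition yielding the final $2^{-L^{\hat{u}}_{k,i}+1}$ term. If anything, your bookkeeping of that last term (bias truncation plus final storage) is more explicit than the paper's, which states $\epsilon_\gamma\leq\epsilon_\beta+2^{-L^{\hat{u}}_{k,i}}$ but then writes $2^{-L^{\hat{u}}_{k,i}+1}$ in the conclusion.
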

\begin{proof}
The objective is to bound the resulting error for each neuron (Figure~\ref{fig2}) of each layer due to affine transformations by bounding the error of the formula of Equation~(\ref{e8}).\\
We compute first the error of multiplication $\x{w}_{k-1,i,j}\mult~\x{x}_{k-1,j}$ using  Proposition~\ref{p2}. Then we bound the fixed-point sum of multiplications~$\displaystyle\sum^{n-1}_{j=0} (\hat{w}_{k-1,i,j}\mult \hat{x}_{k-1,j})$~and finally we use Proposition~\ref{p1} to bound the error of addition ~of $\displaystyle\sum^{n-1}_{j=0} (\hat{w}_{k-1,i,j}\mult~\hat{x}_{k-1,j}) \plus \hat{b}_{k-1,i}$.\\
Let $\epsilon_\alpha\in \mathbb{R}$ be the  error of $\hat{w}_{k-1,i,j}\mult \hat{x}_{k-1,j}$ and $2^{-L^{\hat{u}}_{k,i}}$ the truncation error of the output neuron $\hat{u}_{k,i}$. Using Proposition~\ref{p2}, we obtain
\begin{small}
 \begin{equation}
\label{e13}
 \epsilon_\alpha\leq 2^{M^{\hat{x}}_{k-1,j}-L^{\hat{w}}_{k-1,i,j}}+2^{M^{\hat{w}}_{k-1,i,j}-L^{\hat{x}}_{k-1,j}} +  2^{-L^{\hat{u}}_{k,i}}.
\end{equation}
\end{small}
Now, let us consider $\epsilon_\beta \in \mathbb{R}$ as the error of $\displaystyle\sum_{j=0}^{n-1}~(\hat{w}_{k-1,i,j}\mult~\hat{x}_{k-1,j})$. This error is computed by using the result of Equation~(\ref{e13}) such that
\begin{small}
\begin{equation}
\label{e14}
  \epsilon_\beta \leq \displaystyle \sum_{j=0}^{n-1}( 2^{-L^{\hat{w}}_{k-1,i,j}}\times 2^{M^{\hat{x}}_{k-1,j}} +2^{-L^{\hat{x}}_{k-1,j}}\times 2^{M^{\hat{w}}_{k-1,i,j}}) + \displaystyle \sum_{j=0}^{n-2}2^{-L^{\hat{u}}_{k,i}}+2^{-L^{\hat{u}}_{k,i}}. 
\end{equation}
\end{small}
Consequently,
\begin{small}
 \begin{equation}
 \label{e15}
 \epsilon_\beta \leq \displaystyle \sum_{j=0}^{n-1}(2^{M^{\hat{x}}_{k-1,j}-L^{\hat{w}}_{k-1,i,j}}+2^{M^{\hat{w}}_{k-1,i,j}-L^{\hat{x}}_{k-1,j}}) + n\times2^{-L^{\hat{u}}_{k,i}}.
 \end{equation}
 \end{small}
Finally, let $\epsilon_\gamma\in \mathbb{R}$ be  the error of $\displaystyle\sum_{j=0}^{n-1}~(\hat{w}_{k-1,i,j}\mult~\hat{x}_{k-1,j})\plus~\hat{b}_{k-1,i}$. Using Equation~(\ref{e15}) and Proposition~\ref{p1} we obtain $\epsilon_\gamma \leq$~$\epsilon_\beta +2^{-L^{\hat{u}}_{k,i}}$.
Finally,
\begin{small}
  \begin{equation}
   \label{e18}
 \epsilon_\gamma \leq \displaystyle \sum_{j=0}^{n-1}(2^{M^{\hat{x}}_{k-1,j}-L^{\hat{w}}_{k-1,i,j}}+2^{M^{\hat{w}}_{k-1,i,j}-L^{\hat{x}}_{k-1,j}})+ n\times2^{-L^{\hat{u}}_{k,i}}+2^{-L^{\hat{u}}_{k,i}+1}.
  \end{equation}
  \end{small}
If we combine Equations~\eqref{e10} and (\ref{e18}), we obtain
\begin{small}
\begin{equation}
\label{e24}
\theta_{k,i}=\epsilon_\gamma\leq\displaystyle \sum_{j=0}^{n-1}(2^{M^{\hat{x}}_{k-1,j}-L^{\hat{w}}_{k-1,i,j}}+2^{M^{\hat{w}}_{k-1,i,j}-L^{\hat{x}}_{k-1,j}})+ n\times2^{-L^{\hat{u}}_{k,i}}+2^{-L^{\hat{u}}_{k,i}+1}.\hspace{0.7cm} \textnormal{\qed}
\end{equation}
\end{small}
\end{proof}

In this section, we have bounded the affine transformation error $\theta_{k,i}$ for each neuron $\hat{u}_{k,i}$ of each layer $k$ of the NN in Equation~(\ref{e24}), respecting the equivalent floating-point computations. This resulting error $\theta_{k,i}$ is used in Section \ref{s5} to compute the optimal format $<M^{\x{u}}_{k,i},L^{\x{u}}_{k,i}>$ for each neuron $\hat{u}_{k,i}$.
\section{\textbf{Constraints Generation}}
In this section, we demonstrate how to generate the linear constraints automatically for a given NN, in order to optimize the number of significant bits after the binary point $L^{\x{u}}_{k,i}$ of the format $<M^{\x{u}}_{k,i},L^{\x{u}}_{k,i}>$ corresponding to the output $\x{u}_{k,i}$.
Let us remember that we have a floating-point NN with $m$ layers and $n$ neurons per layer working at some precision, and we want to compute a fixed-point NN with the same behavior than the initial floating-point NN for a given input vector. This new fixed-point NN must respect the threshold error and the data type $T$ $\in \{8,\, 16,\, 32\}$ bits for the C synthesized code.
The variables of the system of constraints are $L^{\x{u}}_{k,i}$ and $L^{\x{w}}_{k-1,i,j}$. They correspond respectively to the length of the fractional part of the output $\x{u}_{k,i}$ and $\x{w}_{k-1,i,j}$. We have $M^{\x{u}}_{k,i},\, M^{\x{x}}_{k-1,i},\, M^{\x{w}}_{k-1,i,j}\in~\mathbb{Z},$ and $L^{\x{u}}_{k,i},\, L^{\x{x}}_{k-1,i},\, L^{\x{w}}_{k-1,i,j}\in \mathbb{N}$, for $ 1 \leq k < m+1 $, and $0< i,\, j < n$, such that $M^{\x{u}}_{k,i}$ (respectively $M^{\x{x}}_{k-1,i},\, M^{\x{w}}_{k-1,i,j}$) can be negative when the value of the floating-point number is between $-1$ and $1$.
We have also, $M^{\x{w}}_{k-1,i,j}$ (respectively $M^{\x{x}}_{0,i}$ the number of bits before the binary point  of the feature layer) which is obtained by computing the ufp defined in Equation~\eqref{ufp} of the corresponding floating-point coefficient. Finally, the value of $M^{\x{u}}_{k,i}$ is obtained through the fixed-point arithmetic (addition and multiplication) in Section~\ref{s3}. 
In Equation~\eqref{c1} of Figure~\ref{cst} (respectively \eqref{c2} and \eqref{c3}), the length $M^{\x{x}}_{k,i}+L^{\x{x}}_{k,i}$ (respectively $M^{\x{u}}_{k,i}+L^{\x{u}}_{k,i}$ and $M^{\x{w}}_{k,i,j}+L^{\x{w}}_{k,i,j}$) of the fixed-point number $\x{x}$ (respectively $\x{u}$ and $\x{w}$) must be less than or equal to $T-1$ to ensure the data type required. We use $T-1$ in these three constraints because we keep one bit for the sign. 
Equation (\ref{c11}) is about the multiplication. It asserts that the total number of bits of $\x{x}$ and $\x{w}$ is not exceeding the data type $T-1$.
Equations~\eqref{c4}, \eqref{c5} and \eqref{c10} assert that the number of significant bits of the fractional parts cannot be negative.
The boundary condition for the neurons of the output layer is represented in Equation~\eqref{c6}. It gives a lower bound for $L^{\x{u}}_{m,i}$ and then ensures that the error threshold is satisfied for all the neurons of the output layer.
\label{s5}
\begin{figure}[!h]
\begin{tabular}{c}
\hline
\hline
\end{tabular}
\begin{tabular}{c}
\hline
\hspace{15.3cm}                                                       \\      
\end{tabular}
\begin{small}
\begin{equation}
\label{c1}
\begin{aligned}
M^{\x{x}}_{k,i}+L^{\x{x}}_{k,i} \leq T-1,	\hspace{0.5cm} 0\leq k\leq m,\hspace{0.1cm} 0\leq i< n
\end{aligned}
\end{equation}
\end{small}
\begin{small}
\begin{equation}
\label{c2}
\begin{aligned}
M^{\x{u}}_{k,i}+L^{\x{u}}_{k,i} \leq T-1, \hspace{0.5cm} 1\leq k < m+1,\hspace{0.1cm} 0\leq i< n
\end{aligned}
\end{equation}
\end{small}
\begin{small}
\begin{equation}
\label{c3}
\begin{aligned}
M^{\x{w}}_{k,i,j}+L^{\x{w}}_{k,i,j} \leq T-1, \hspace{0.5cm} 0\leq k < m,\hspace{0.1cm} 0\leq i,j< n
\end{aligned}
\end{equation}
\end{small}
\begin{small}
\begin{equation}
\label{c11}
\begin{aligned}
M^{\x{w}}_{k,i,j}+L^{\x{w}}_{k,i,j}+ M^{\x{x}}_{k,j}+L^{\x{x}}_{k,j} \leq T-1, \hspace{0.5cm} 0\leq k < m,\hspace{0.1cm} 0\leq i,j< n
\end{aligned}
\end{equation}
\end{small}
\begin{small}
\begin{equation}
\label{c4}
\begin{aligned}
L^{\x{x}}_{k,i} \geq 0, \hspace{0.5cm} 0\leq k\leq m,\hspace{0.1cm} 0\leq i< n
\end{aligned}
\end{equation}
\end{small}
\begin{small}
\begin{equation}
\label{c5}
L^{\x{u}}_{k,i} \geq 0, \hspace{0.5cm} 1\leq k < m+1,\hspace{0.1cm} 0\leq i< n
\end{equation}
\end{small}
\begin{small}
\begin{small}
\begin{equation}
\label{c10}
\begin{aligned}
L^{\x{w}}_{k,i,j} \geq 0, \hspace{0.5cm} 0\leq k< m,\hspace{0.1cm} 0\leq i,j< n
\end{aligned}
\end{equation}
\end{small}
\begin{equation}
\label{c6}
L^{\x{u}}_{m,i} \geq |\text{ufp}(|Threshold|)|, \hspace{0.3cm} 0\leq i< n,\hspace{0.1cm} m:\hspace{0.01cm}\textnormal{last}\hspace{0.1cm} \textnormal{layer }\hspace{0.03cm} \textnormal{of }\hspace{0.05cm} \textnormal{NN}
\end{equation}
\end{small}
 \begin{small}
\begin{equation}
\label{c7}
\forall j \, : \,L^{\x{u}}_{k,i} \leq L^{\x{x}}_{k-1,j}, \hspace{0.123cm}  \hspace{0.1cm} 1\leq k < m+1,\hspace{0.1cm} 0\leq i,j< n
\end{equation}
\end{small}
\begin{small}
\begin{equation}
\label{c8}
L^{\x{x}}_{k,i} \leq L^{\x{u}}_{k,i}, \hspace{0.3cm} 1\leq k < m+1,\hspace{0.1cm} 0\leq i< n
\end{equation}
\end{small}
\begin{footnotesize}
\begin{multline}
\label{c9}
L^{\x{u}}_{k,i}\times (\text{ufp}(n)+1)+ \sum_{j=0}^{n-1}(L^{\x{x}}_{k-1,j}+L^{\x{w}}_{k-1,i,j})\geq \sum_{j=0}^{n-1}(M^{\x{x}}_{k-1,j}+M^{\x{w}}_{k-1,i,j})-\text{ufp}(|Threshold|)-1,\cr \hspace{0.1cm} 1\leq k < m+1,\hspace{0.1cm} 0\leq i,j< n
\end{multline}
\end{footnotesize}
\begin{tabular}{c}
\hline
 \hspace{15.3cm}                                            
\end{tabular}
\caption{\centering Constraints generated for the formats optimization of each neuron of the NN.}
\label{cst}
\end{figure}

In Figure~\ref{cst}, Equation~\eqref{c7} represents the  constraint where the propagation is done in a forward way, and Equation \eqref{c8} represents the  constraint where the propagation is done in a backward way.
These constraints bound the length of the fractional parts in the worst case.
The constraint of Equation~\eqref{c7} aims at giving an upper bound of $L^{\x{u}}_{k,i}$. 
It ensures that $L^{\x{u}}_{k,i}$ of the output of the neuron $i$ of the layer $k$ is less than (or equal to) all its inputs $L^{\x{x}}_{k-1,j}$, $0 \leq j <n$. The constraint of Equation~\eqref{c8} gives an upper bound for $L^{\x{x}}_{k,i}$ of the input $\x{x}$. This constraint ensures that the number of significant bits after the binary point $L^{\x{x}}_{k,i}$ of the input of the neuron $i$ of the layer $k+1$ is equal to (or less than) $L^{\x{u}}_{k,i}$ of the neuron $i$ of the previous layer $k$.
The constraint of Equation~\eqref{c9} in Figure~\ref{cst} aims at bounding $L^{\x{u}}_{k,i}$ of the output of the neuron $i$ for the layer $k$ and $L^{\x{w}}_{k-1,i,j}$ of the coefficients of matrix $\x{W}_{k-1}$. This constraint corresponds to the error done during the computation of the affine transformation in Equation~(\ref{e8}). The Equation~\eqref{c9} is obtained by the linearization of Equation~(\ref{e10}) of Proposition~\ref{p3}, in other words, we have to compute the ufp of the error.
The ufp of the error, written $\text{ufp}(\theta_{k,i})$, is computed as follow\\ \\
Using Equation~\eqref{e24} of Proposition~\ref{p3}, we have\\
\begin{small}
\begin{center}
$\textnormal{ufp}(\theta_{k,i})\leq \textnormal{ufp}(\displaystyle \sum_{j=0}^{n-1}(2^{M^{\hat{x}}_{k-1,j}-L^{\hat{w}}_{k-1,i,j}}+2^{M^{\hat{w}}_{k-1,i,j}-L^{\hat{x}}_{k-1,j}})+ n\times2^{-L^{\hat{u}}_{k,i}}+2^{-L^{\hat{u}}_{k,i}+1})$,
\end{center}
\end{small}
then we obtain
\begin{small}
\begin{center}
$\textnormal{ufp}(\theta_{k,i})\leq \displaystyle \sum_{j=0}^{n-1}({M^{\hat{x}}_{k-1,j}-L^{\hat{w}}_{k-1,i,j}}+{M^{\hat{w}}_{k-1,i,j}-L^{\hat{x}}_{k-1,j}})-L^{\hat{u}}_{k,i}\times({\textnormal{ufp}(n)+1})+1$.
\end{center}
\end{small}
We notice that 
\begin{small}
$ \textnormal{ufp}(\theta_{k,i})\leq \textnormal{ufp}(|Threshold|) \leq 0 $ because the error is between $0$ and $1$.\\
\end{small}
Finally,
\begin{small}
$L^{\x{u}}_{k,i}\times (\text{ufp}(n)+1)+$ $\displaystyle \sum_{j=0}^{n-1}(L^{\x{x}}_{k-1,j}+L^{\x{w}}_{k-1,i,j})\geq \displaystyle \sum_{j=0}^{n-1}(M^{\x{x}}_{k-1,j}+M^{\x{w}}_{k-1,i,j})-\text{ufp}(|Threshold|)-1.$  \hfill \qed   
\end{small}

All the constraints defined in Figure~\ref{cst} are linear with integer variables. The optimal solution is found by solving them by linear programming. This solution gives the minimal 
number of bits for the fractional part required for each neuron $\x{u}_{k,i}$ of each layer taking into account the data type $T$ and the error threshold tolerated by the user in one hand, and on the other hand the minimal number of bits of the fractional part required for each coefficient $\x{w}_{k-1,i,j}$. 
\section{\textbf{Implementation}}
\label{s9}
In this section, we present our tool.
Our approach which is computing the optimal formats $<~M^{\x{u}}_{k,i}, L^{\x{u}}_{k,i}>$ for each neuron $\x{u}_{k,i}$ of each layer for a given NN, satisfying an error threshold between $0$ and $1$ and a data type $T$ given by the user is evaluated through this~tool.

Our tool is a fixed-point code synthesis tool. It synthesizes a C code for a given NN. This code contains arithmetic operations and activation functions, which use the fixed-point arithmetic (integer arithmetic) only. In this article, we present only the $\x{ReLU}$ and $\x{Linear}$ activation functions (defined in Equation~\eqref{relu} and \eqref{linear} respectively) but we can also deal with  $\x{Sigmoid}$ and $\x{Tanh}$ activation functions in our current implementation. They are not shown but they are available in our framework. We have chosen to approximate them through piecewise linear approximation \cite{activ} using fixed-point arithmetic. We compute the corresponding error like in $\x{ReLU}$ and  $\x{Linear}$, then we generate the corresponding constraints. 

A description of our tool is given in Figure~\ref{fig3}. 
\begin{figure}[!h]
\begin{center}
\includegraphics[scale=0.23]{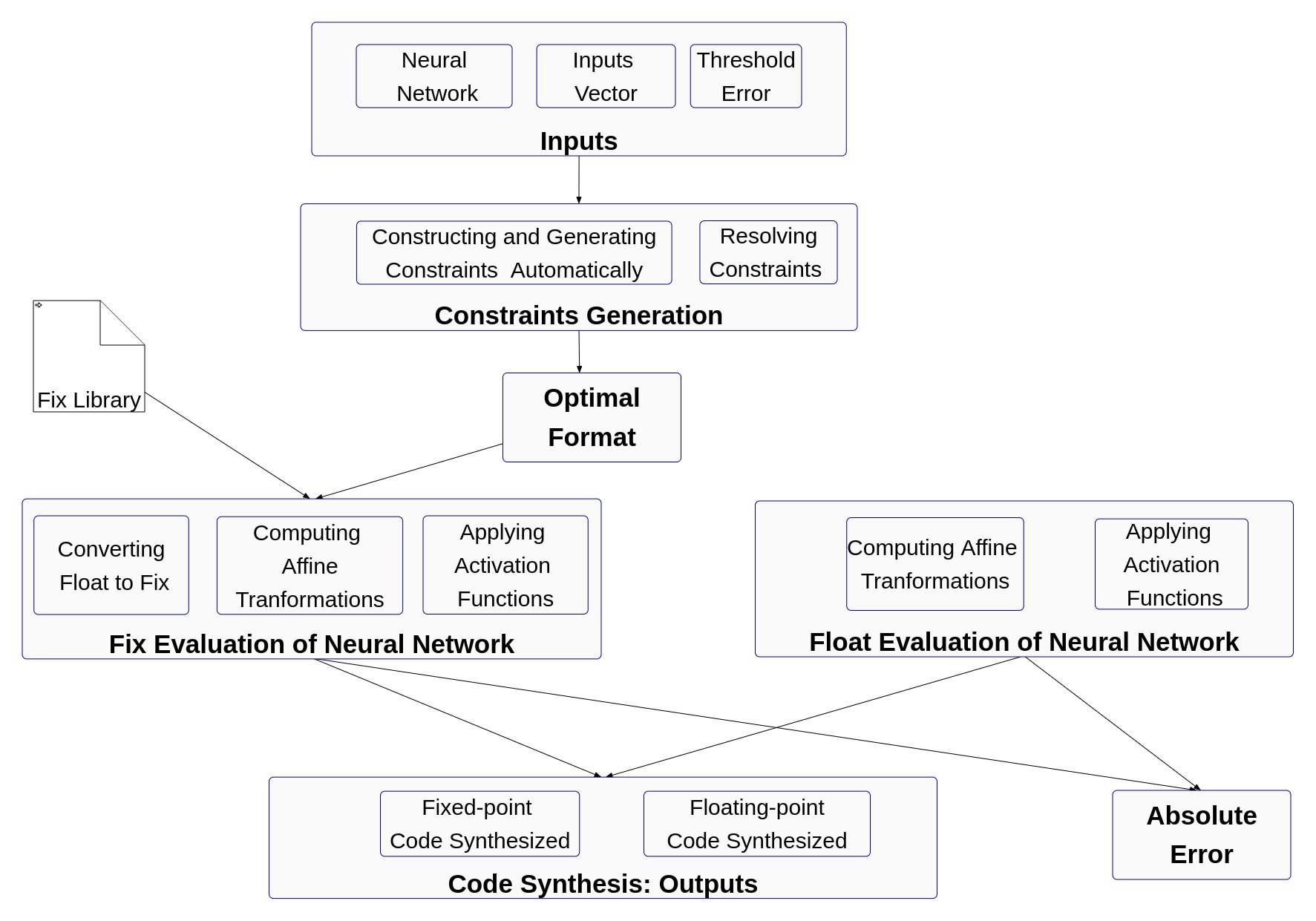}
\caption{Our tool features.}
\label{fig3}
\end{center}
\end{figure}
It takes a floating-point NN working at some precision, input vectors and a threshold error chosen by the user.
The user also has the possibility to choose the data type $T\in\{8, 16, 32\}$ bits wanted for the  code synthesis.
First, we do a preliminary analysis of the NN through many input vectors in order to determine the range of the outputs of the neurons for each layer in the worst case. We compute also the most significant bit of each neuron of each layer in the worst case which gives us the range of the inputs of the NN $[x_{min},x_{max}]$ for which we certify that our synthesized code is valid for any input belong this range respecting the required data type and the threshold.
Our tool generates automatically the constraints mentioned in Figure~\ref{cst} in Section \ref{s5} and solves them using the linprog function of scipy library \cite{welke2020ml2r} in Python. Then, the optimal formats for each output neuron, input neuron and coefficients of biases and matrices of weights are obtained. 
The optimal formats are used for the conversion from the floating-point into fixed-point numbers for all the coefficients
  inside each neuron. To make an evaluation of the NN in fixed-point arithmetic i.e computing the function in Equation~(\ref{e8}), a fixed-point library is needed. Our library contains mainly the following functions:  fixed-point addition and multiplication, shifts, fixed-point activation functions $\x{Tanh}$, $\x{Sigmoid}$, $\x{ReLU}$ and $\x{Linear}$. 
  The conversion of a fixed-point number to a floating-point number and the conversion of a floating-point number to a fixed-point number is also available in this library. 
The last step consists of the fixed-point code synthesis. We also synthesize a floating-point code to make a comparison with the fixed-point synthesized code.
We show experiments and results of some NNs in Section \ref{s6}, and we compare them with floating-point results in terms of memory space (bits saved) and accuracy.
\section{\textbf{Experiments}}
\label{s6}
In this section, we show some experimental results done on seven trained NNs. Four of them are interpolators which compute mathematical functions and the three others are classifiers. These NNs are described in Table \ref{desc}. The first column gives the name of the NNs. The second column gives the number of layers and the third one shows the number of neurons. The number of connections between the neurons is shown in the fourth column and the number of generated constraints for each NN by our approach is given in the last column.
\begin{table}[h]
\centering
\caption{\centering \textnormal{Description of our experimental NNs.}}
\label{desc}
\begin{tabular}{|c|c|c|c|c|}
\hline
\diagbox{NN}{Desc.} & 
Layers   & Neurons
    & Connections & Constraints    \\ \hline
\textit{\textbf{hyper}} & $4$  & $48$ & $576$ & 1980  \\ \hline
\textit{\textbf{bumps}} & $2$  & $60$ & $1800$ & 5010  \\ \hline
\textit{\textbf{CosFun}} &4   &40 &400  & 1430  \\ \hline
\textit{\textbf{Iris}} &3   &33 &363 & 1243 \\ \hline
\textit{\textbf{Wine}} &2   &52 &1352 & 3822  \\ \hline
\textit{\textbf{Cancer}} &3   &150 &7500 & 21250  \\ \hline
\textit{\textbf{AFun}} &2   &200 &10000 & 51700  \\ \hline
\end{tabular}

\hspace{0.3cm}
\end{table}

The NN \textit{hyper} in Table \ref{desc} is an interpolator network computing the hyperbolic sine of the point $(x, \, y)$. It is made of four layers, $48$ fully connected neurons ($576$ connections.) The number of constraints generated for this NN in order to compute the optimal formats is $1980$. The NN \textit{bumps} is an interpolator network computing the $bump$ function. The affine function $f(x)=4\times x+\frac{3}{2}$ is computed by the \textit{AFun} NN (interpolator) and the function $f(x,y)=x\times cos(y)$ is computed by the \textit{CosFun} NN (interpolator.)
The classifier \textit{Iris} is a NN which classifies the \textit{Iris} plant \cite{swain2012approach} into three classes: \textit{Iris-Setosa}, \textit{Iris-Versicolour} and \textit{Iris-Virginica}. It takes four numerical attributes as input (sepal length in cm, sepal width in cm, petal length in cm and petal width in cm.) The NN \textit{Wine} is also a classifier. It classifies wine into three classes \cite{aeberhard1992classification} through thirteen numerical attributes (alcohol, malic acid, ash, etc.) The last one is the \textit{Cancer} NN  which classifies the cancer into two categories (malignant and benign) through thirty numerical attributes \cite{wolberg1992breast} as input.
 These NNs originally work in IEEE754 single precision. We have transformed them into fixed-point NNs satisfying a threshold error and a data type $T$ (the size of the fixed-point numbers in the synthesized code) set by the user. Then we apply the $\x{ReLU}$ activation function defined in Equation~\eqref{relu} or the $\x{Linear}$ activation function defined in Equation \eqref{linear} (or $\x{Sigmoid}$ and~$\x{Tanh}$ also.) 
\subsection{Accuracy and Error Threshold}
The first part of experiments is for accuracy and error threshold. It concerns Table~\ref{cmp1}, Figure~\ref{cosfun_3d} and Figure~\ref{histo} and shows if the concerned NNs satisfy the error threshold set by the user using the data type $T$. If the NN is an interpolator, it means that the output of the mathematical function $f$ has an error less than or equal to the threshold. If the NN is a classifier, it means that the error of classification of the NN is less than or equal to $(threshold\times 100)\%$.

 The symbol  $\times$ in Table \ref{cmp1} refers to the infeasability of the solution when the linear programming \cite{welke2020ml2r} fails to find a solution or when we cannot satisfy the threshold using the data type $T$. The symbol $\surd$ means that our linear solver has found a solution to the system of constraints (Section~\ref{s5}.) 
Each line of Table~\ref{cmp1} corresponds to a given NN in some precision and the columns correspond to the multiple values of the error thresholds. 
For example, in the first line of the first column, the NN $hyper\_32$ requires a data type $T=32$~bits and satisfies all the values of threshold (till $10^{-6}\approx 2^{-20}$.) In the fifth column, $2^{-4}$ means that we require at least four significant bits in the fractional part of the worst output of the last layer of the NNs. 
The fixed-point NNs \textit{bumps\_32, AFun\_32, CosFun\_{32}, Iris\_{32} \textnormal{and} Cancer\_{32} } fulfill the threshold value $2^{-10}$ which corresponds to ten accurate bits in the fractional part of the worst output. Beyond this value, the linear programming \cite{welke2020ml2r} does not find a solution using a data type on $32$~bits for these NNs. 
Using the data type $T=16$~bits, all the NNs except \textit{AFun\_16} have an error less than or equal to $2^{-4}$ and ensure the correctness at least of four bits after the binary point of the worst output in the last layer.
Only the NNs \textit{Wine\_8} and \textit{Cancer\_8} can ensure one significant bit after the binary point using data type on $8$ bits. In the other NNs, only the integer part is correct.

The results can vary depending on several parameters: the input vector, the coefficients of $W$ and $b$, the activation functions, the error threshold and the data type $T$. Generally, when the coefficients are between $-1$ and $1$, the results are more accurate because their ufp (Equation\eqref{ufp}) are negative and we can go far after the binary point. The infeasability of the solutions depends also on the size of the data types $T$, for example if we have a small data type $T$ and a consequent number of bits before the binary point in the coefficients (large value of $M$), we cannot have enough bits after the binary point to satisfy the small error thresholds.

Figure \ref{cosfun_3d} represents the fixed-point outputs of the interpolator \textit{CosFun\_32} (lines) and the floating-point outputs (shape) for multiple inputs. All the fixed-point  outputs must respect the threshold $2^{-10}$ (ten significant bits in the fractional part) and must use the data type $T=32$ bits in this case.
We can see that the two curves are close. This means that the new NN (fixed-point) has the same behavior and answer comparing to the original NN. The result is correct for this NN for any inputs $x\in[-4,4]$ and $y\in[-4,4]$.
\begin{table*}
\caption{\centering \textnormal{Comparison between the multiple values of error thresholds set by the user and our tool experimental errors using data types on $32$, $16$ and $8$ bits for the fixed-point synthesized code.}\label{cmp1}}
 \begin{tabularx}{\linewidth}{@{}c|@{}c|@{}c *8{>{\centering\arraybackslash}X}@{}@{}}
\toprule
\hline
\textbf{Data Types}& \diagbox[width=2.1cm,height=3.25\line]{\textbf{NN}}{\textbf{Threshold}}     & \textbf{$10^{0}=\hspace{0.0cm}2^{0}$} & \textbf{$0.5=\hspace{0.5cm}2^{-1}$} & \textbf{$10^{-1}\approx \hspace{0.05cm}2^{-4}$}& \textbf{$10^{-2}\hspace{0.05cm}\approx2^{-7}$} & \textbf{$10^{-3}\approx2^{-10}$}& \textbf{$10^{-4}\approx2^{-14}$}& \textbf{$10^{-5}\approx2^{-17}$} & \textbf{$10^{-6}\approx2^{-20}$}   \\ 
\midrule
\hline
\multirow{6}*{\textbf{32 bits}}&\textit{\textbf{hyper\_$32$}}& $\surd$&$\surd$ &$\surd$ &$\surd$ &$\surd$ & $\surd$&$\surd$ & $\surd$\\
&\textit{\textbf{bumps\_$32$}}& $\surd$& $\surd$&$\surd$ & $\surd$&$\surd$ &$\times$ &$\times$ &$\times$ \\
&\textit{\textbf{AFun\_$32$}} & $\surd$& $\surd$&$\surd$ & $\surd$&$\surd$ &$\times$ &$\times$ &$\times$ \\
&\textit{\textbf{CosFun\_$32$}}&$\surd$&$\surd$ &$\surd$ &$\surd$ &$\surd$ & $\times$&$\times$ & $\times$ \\
&\textit{\textbf{Iris\_$32$}}&$\surd$& $\surd$&$\surd$ & $\surd$&$\surd$ &$\times$ &$\times$ &$\times$ \\
&\textit{\textbf{Wine\_$32$}}& $\surd$& $\surd$&$\surd$ & $\surd$&$\surd$ &$\surd$ &$\times$ &$\times$\\
&\textit{\textbf{Cancer\_$32$}}& $\surd$& $\surd$&$\surd$ & $\surd$&$\surd$ &$\times$ &$\times$ &$\times$\\
\hline
\hline
\multirow{6}*{\textbf{16 bits}}&\textit{\textbf{hyper\_$16$}} &  $\surd$ &$\surd$& $\surd$& $\times$&$\times$ &$\times$&$\times$&$\times$      \\ 
&\textit{\textbf{bumps\_$16$}}       &  $\surd$ &$\surd$& $\surd$& $\times$&$\times$ &$\times$&$\times$&$\times$      \\ 
&\textit{\textbf{AFun\_$16$}}      &$\surd$ &$\surd$& $\times$& $\times$&$\times$ &$\times$&$\times$&$\times$    \\ 
&\textit{\textbf{CosFun\_$16$}}  &  $\surd$ &$\surd$& $\surd$& $\times$&$\times$ &$\times$&$\times$&$\times$        \\ 
&\textit{\textbf{Iris\_$16$}}      & $\surd$& $\surd$ & $\surd$ & $\times$&$\times$ &$\times$&$\times$&$\times$      \\ 
&\textit{\textbf{Wine\_$16$}}& $\surd$ &$\surd$& $\surd$& $\times$&$\times$ &$\times$&$\times$&$\times$ \\
&\textit{\textbf{Cancer\_$16$}}& $\surd$ &$\surd$& $\surd$& $\times$&$\times$ &$\times$&$\times$&$\times$ \\
\hline \hline
\multirow{6}*{\textbf{8 bits}}&\textit{\textbf{hyper\_$8$}}   &  $\times$ & $\times$& $\times$&$\times$ &$\times$&$\times$&$\times$  &$\times$          \\ 
&\textit{\textbf{bumps\_$8$}}  & $\surd$ & $\times$& $\times$&$\times$ &$\times$&$\times$&$\times$  &$\times$       \\ 
&\textit{\textbf{AFun\_$8$}}       & $\times$ & $\times$& $\times$&$\times$ &$\times$&$\times$&$\times$  &$\times$      \\ 
&\textit{\textbf{CosFun\_$8$}}         & $\times$ & $\times$& $\times$&$\times$ &$\times$&$\times$&$\times$  &$\times$    \\ 
&\textit{\textbf{Iris\_$8$}}        &  $\surd$ &  $\times$& $\times$&$\times$ &$\times$&$\times$&$\times$  &$\times$    \\ 
&\textit{\textbf{Wine\_$8$}}&$\surd$ &$\surd$ &  $\times$&$\times$ &$\times$&$\times$&$\times$  &$\times$ \\
&\textit{\textbf{Cancer\_$8$}}&$\surd$ &$\surd$ &  $\times$&$\times$ &$\times$&$\times$&$\times$  &$\times$ \\
\bottomrule
\end{tabularx}
\end{table*}

\begin{figure}[!h]
   \begin{minipage}[c]{.4\linewidth}
      \includegraphics[width=7.5cm]{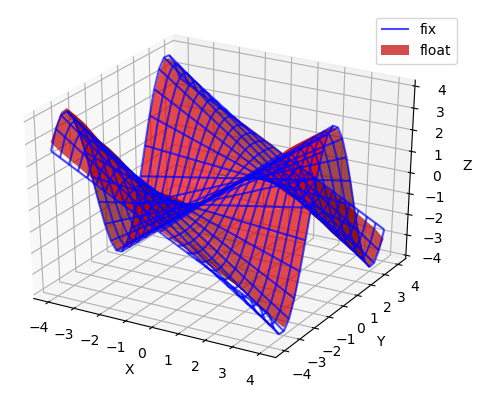}
      \caption{\centering Fixed-point outputs vs floating-point outputs of the \textit{CosFun\_32} NN for multiple inputs using a data type on $32$ bits and an error threshold $\approx 2^{-10}$.}
\label{cosfun_3d}
   \end{minipage} \hfill
   \begin{minipage}[c]{.46\linewidth}
      \includegraphics[width=6cm]{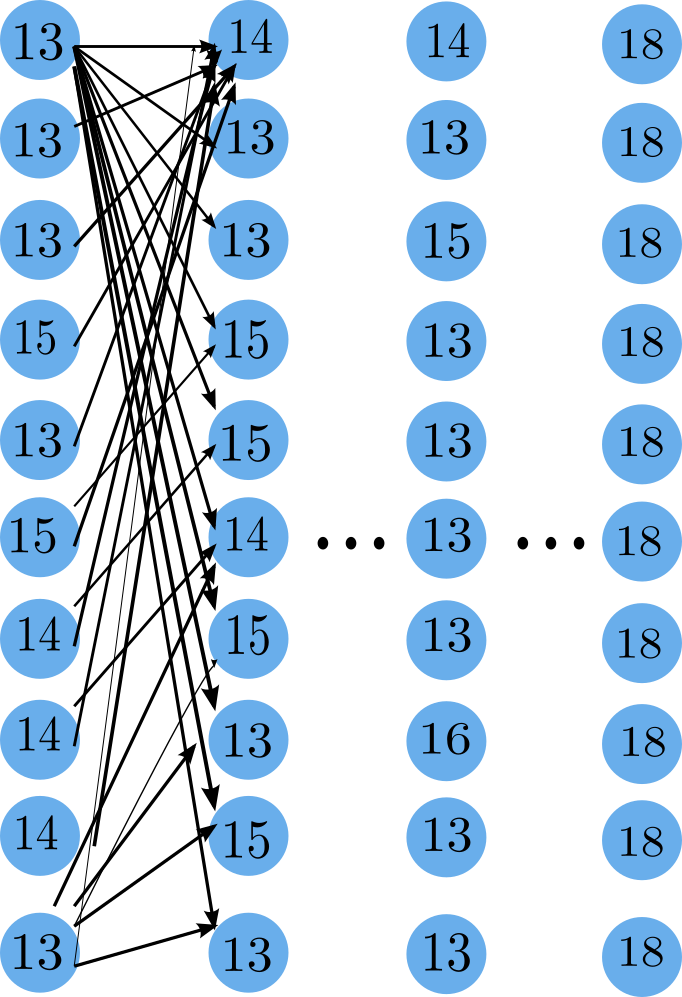}
      \caption{\centering Number of bits of each neuron of the \textit{CosFun\_32} NN after formats optimization for a threshold $\approx 2^{-10}$.}
\label{cosfun}
   \end{minipage}
\end{figure}
Figure~\ref{histo} shows the results of the fixed-point classifications for the NNs \textit{Iris\_32} (right) and \textit{Wine\_32} (left) using a data type $T=32$ bits and a threshold value $2^{-7}$ for multiple input vectors ($=8$).
For example, the output corresponding to the input vectors $1$ and $2$ of the \textit{Iris\_32} NN is \textit{Iris-Versicolour}, for the input vectors $3$, $5$ and $6$ \textit{Iris-Virginica} and \textit{Iris-Setosa} for the others. The results are interpreted in the same way for the \textit{Wine\_32} NN. We notice that we have obtained the same classifications with the floating-point NNs using the same input vectors, so our new NN has the same behavior as the initial NN.
\begin{figure*}[!h]
\begin{center}
\includegraphics[scale=0.27]{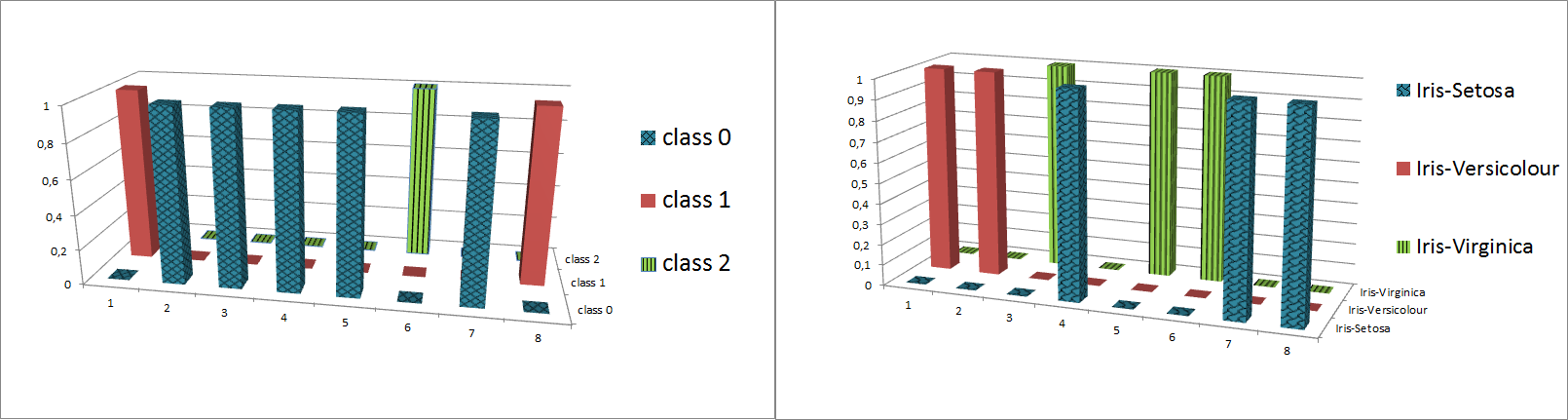}
\caption{\centering Results of the fixed-point classification of the NNs \textit{Iris\_32} (right) and \textit{Wine\_32} (left) respecting the threshold value $2^{-7}$ and~the~data~type~$T=32$~bits.}
\label{histo}
\end{center}
\end{figure*}
\subsection{Bits/Bytes Saved}
The second part of experiments concerns Figure~\ref{cosfun}, Figure~\ref{numberbit} and Table~\ref{gain} and aims at showing that our approach saves bytes/bits through the computation of the optimal format for each neuron done in Section~\ref{s5}. At the beginning, all the neurons are represented in $T\in \{8,\, 16,\, 32\}$~bits and after the optimization step, we reduce consequently the number of bits for each neuron while respecting the threshold set by the user.  

Figure \ref{cosfun} shows the total number of bits of all the neurons for each layer of the \textit{CosFun\_32} NN after the optimization of the formats $<M,$ $L>$ in order to satisfy the threshold $2^{-10}$ and the data type $T=32$ bits for the fixed-point synthesized code in this case. In the synthesized C code, the data types will not change through this optimization because they are defined at the beginning of the program statically, but it is interesting to use these results in FPGA \cite{bevcvavr2005fixed,GSS19} for example. 
We notice that the size of all the neurons was $32$ bits at the beginning and after our tool optimization (resolving the constraints of Section \ref{s5}), we need only $18$ bits for the neurons of the output layer to satisfy the error threshold. We win $14$ bits which represents a gain of $43.75\%$.

\begin{figure}[!h]
\centerline{\includegraphics[scale=0.65]{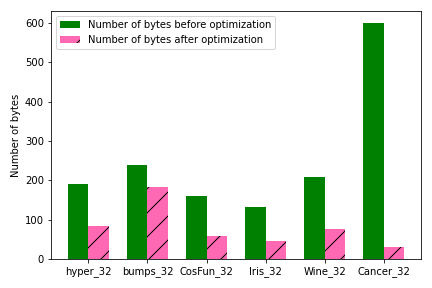}}
\caption{\centering Initial vs saved bytes for our experimental NNs on $32$ bits respecting the threshold $2^{-7}$.}
\label{numberbit}
\end{figure}

In the initial NNs, all the neurons were represented on $32$ bits ($4$ bytes) but after the formats optimization (Section~\ref{s5}) through linear programming \cite{welke2020ml2r}, we save many bytes for each neuron of each layer and the size of the NN becomes considerably small.
It is useful in the case when we use FPGA \cite{bevcvavr2005fixed,GSS19}, for example. Figure~\ref{numberbit} shows the size of each NN (bytes) before and after optimization, and the Table \ref{gain} demonstrates the percentage of gain for each NN. For example, in the NN \textit{Cancer\_32}, we reduce for  $18\times$ the number of bytes comparing to the initial NN (we earn up $94,66\%$.) Our approach saves bits and takes in consideration the threshold  error set by the user (in this example it is $2^{-7}$.)
\begin{table}[h]
\centering
\caption{\centering \textnormal{Gain of bytes of the experimental NNs after formats optimization for a threshold $\approx 2^{-7}$ and a data type $T=32$ bits.}}
\label{gain}
\begin{small}
\begin{tabular}{|c|c|c|c|c|}
\hline
\diagbox{NN}{Desc.} & 
Size before opt. & Size after opt. & Bytes saved
    & Gain (\%)   \\ \hline
\textit{\textbf{hyper\_32}} & $192$ & $84$ & $108$ & \textbf{$56,25$}   \\ \hline
\textit{\textbf{bumps\_32}} & $240$ & $183$ & $57$ & \textbf{$23.75$}  \\ \hline
\textit{\textbf{CosFun\_32}}&$160$ & $59$ & $101$ &\textbf{$63.12$}  \\ \hline
\textit{\textbf{Iris\_32}} &$132$ &$47$  &$85$ &\textbf{$64.39$}  \\ \hline
\textit{\textbf{Wine\_32}} &$208$ &$77$  &$131$ &\textbf{$62.98$}  \\ \hline
\textit{\textbf{Cancer\_32}} &$600$ &$32$  &$568$ &\textbf{$94.66$}  \\ \hline
\end{tabular}
\hspace{0.3cm}
\end{small}
\end{table}
\subsection{Conclusion}
These experimental results show the efficiency of our approach in terms of accuracy and bits saved (memory space.)
As we can see in Figure~\ref{fig_code}, the synthesized code contains only assignments, elementary operations \textit{$(+,\, \times,\, >>,\, <<)$} and conditions used in the activation functions. The execution time corresponding to the synthesized code for the experimental NNs of the Table \ref{desc} is in only few milliseconds (ms).
\section{\textbf{Related Work}}
\label{s7}
Recently, a new line of research has emerged on compressing machine learning models~\cite{joseph2020programmable, JinDLTTC19}, using other arithmetics in order to run NNs on devices with small memories, integer CPUs~\cite{GSS19,HanZZWL19} and optimizing data types and computations error~\cite{ioualalen2019neural, LauterV20}.

In this section, we give an overview of some recent work. We present the multiple tools and frameworks (SEEDOT \cite{GSS19}, DEEPSZ~\cite{JinDLTTC19}, Condensa~\cite{joseph2020programmable}) more or less related to our approach. There is no approach comparable with our method because none of them respects a threshold error set by the user in order to synthesize a C code using only integers for a given trained NN without modifying its behavior.
We can cite also FxpNet \cite{ChenHZX17} and Fix-Net \cite{enderich2019fix} which train neural networks using fixed point arithmetic (low bit-width arithmetic) in both forward pass and backward pass. The articles  \cite{enderich2019fix,LinTA16} are about quantization which aims to reduce the complexity of DNNs and facilitate potential deployment on embedded hardware.
There is also another line of research who has emerged recently on understanding safety and robustness of NNs \cite{AI2,DeepPoly,Sherlock,tran2020nnv,katz2017reluplex}. We can mention the frameworks Sherlock \cite{Sherlock}, AI$^{2}$~\cite{AI2}, DeepPoly~\cite{DeepPoly} and NNV~\cite{tran2020nnv}.

The SEEDOT framework \cite{GSS19} synthesizes a fixed-point code for  machine learning (ML) inference algorithms that can run on constrained hardware. This tool presents a compiling strategy that reduces the search space for some key parameters, especially scale parameters for the fixed-point numbers representation used in the synthesized fixed-point code. Some operations are implemented (multiplication, addition, exponential, argmax, etc.) in this approach.
Both of SEEDOT and our tool generate fixed-point code, but our tool  fullfills a threshold and a data type required by the user. SEEDOT finds a scale for the fixed-point representation number and our tool solves linear constraints for finding the optimal format for each neuron.
The main goal of this compiler is the optimization of the fixed-point arithmetic numbers and operations for an FPGA and micro-controllers.

The key idea of \cite{ioualalen2019neural} is to reduce the sizes of data types used to compute inside each neuron of the network (one type per neuron) working in IEEE754 floating-point arithmetic \cite{IEEE,martel2017floating}. The new NN with smaller data types behaves almost like the original NN with a percentage error tolerated.
This approach generates constraints and does a forward and a backward analysis to bound each data type. 
Our tool has a common step with this approach, which is the generation of constraints for finding the optimal format for each neuron (fixed-point arithmetic) for us and the optimal size (floating-point arithmetic) for each neuron for this method.

In\cite{HanZZWL19}, a new data type called Float-Fix is proposed. This new data type is a trade-off between the fixed-point arithmetic \cite{lopez2014implementation,najahi2014synthesis,bevcvavr2005fixed} and the floating-point arithmetic \cite{IEEE,martel2017floating}. This approach analyzes the data distribution and data precision in NNs then applies this new data type in order to fulfill the requirements. The elementary operations are designed for Float-Fix data type and tested in the hardware. The common step with our approach is the analysis of the NN and the range of its output in order to find the optimal format using the fixed-point arithmetic for us and the the optimal precision for this method using Float-Fix data type. Our approach takes a threshold error not to exceed but this approach does not.

DEEPSZ \cite{JinDLTTC19} is a lossy compression framework. It compresses sparse weights in deep NNs using the floating-point arithmetic. DEEPSZ involves four key steps: network pruning, error bound assessment, optimization for error bound configuration and compressed model generation. A threshold is set for each fully connected layer, then the weights of this layer are pruned. Every weight below this threshold is removed. This framework determines the best-fit error bound for each layer in the network, maximizing the overall compression ratio with user acceptable loss of inference accuracy.

The idea presented in \cite{joseph2020programmable} is about using weight pruning and quantization for the compression of deep NNs \cite{sun2018testing}. The model size and the inference time are reduced without appreciable loss in accuracy. The tool introduced is Condensa where the reducing memory footprint is by zeroing out individual weights and reducing inference latency is by pruning 2-D blocks of non-zero weights for a language translation network (Transformer).

A framework for semi-automatic floating-point error analysis for the inference phase of deep learning is presented in \cite{LauterV20}. It transforms a NN into a C++ code in order to analyze the network need for precision. The affine and interval arithmetics are used in order to compute the relative and absolute errors bounds for deep NN \cite{sun2018testing}.
This article gives some theoretical results which are shown for bounding and interpreting the impact of rounding errors due to the precision choice for inference in generic deep NN \cite{sun2018testing}.
\section{\textbf{Conclusion \& Future Work}}
\label{s8}
In this article, we introduced a new approach to synthesize a fixed-point code for NNs using the fixed-point arithmetic and to tune the formats of the computations and conversions done inside the neurons of the network. This method ensures that the new fixed-point NN still answers correctly compared to the original network based on IEEE754 floating-point arithmetic \cite{IEEE}. This approach ensures the non overflow (sufficient bits for the integer part) of the fixed-point numbers in one hand and the other hand, it respects the threshold required by the user (sufficient bits in the fractional part.) It takes in consideration the propagation of the round off errors and the error of inputs through a set of linear constraints among integers, which can be solved by linear programming \cite{welke2020ml2r}.
Experimental results show the efficiency of our approach in terms of accuracy, errors of computations and bits saved.
The limit of the current implementation is the large number of constraints. We use linprog in Python \cite{welke2020ml2r} to solve them but this method does not support a high number of constraints, this is why our experimental NNs are small.

A first perspective is about using another solver to solve our constraints (Z3 for example \cite{Z3}) which deals with a large number of constraints.
A second perspective is to make a comparison study between Z3 \cite{Z3} and linprog \cite{welke2020ml2r} in term of time execution and memory consumption. 
A third perspective is to test our method on larger, real-size industrial neural networks. We believe that our method will scale up as long as the linear programming solver will scale up. If this is not enough, a solution would be to assign the same format to a group of neurons in order to reduce the number of equations and variables in the constraints system. 
A last perspective is to consider the other NN architectures like convolutional NNs \cite{albawi2017understanding,AI2,abraham2005artificial}.

 \bibliographystyle{splncs04}
 \bibliography{aisca22}
\vspace{2cm}

\section*{Authors}
\noindent {\bf H Benmaghnia} received a Master degree in High Performance Computing and Simulations from the University of Perpignan in France. She did a Bachelor of Informatic Systems at the University of Tlemcen in Algeria. Currently, she is pursuing her PhD in Computer Science at the University of Perpignan in LAboratoire de Modélisation Pluridisciplinaire et Simulations (LAMPS). Her research interests include Computer Arithmetic, Precision Tuning, Numerical Accuracy, Neural Networks and Formal Methods.\\

\noindent {\bf M Martel} is a professor in Computer Science
at the University of Perpignan in LAboratoire de Modélisation Pluridisciplinaire et Simulations (LAMPS), France. He is also co-founder and scientific advisor of the start-up Numalis, France. His research interests include Computer Arithmetic, Numerical Accuracy, Abstract Interpretation,  Semantics-based Code Transformations \& Synthesis, Validation of Embedded Systems, Safety of Neural Networks \& Arithmetic Issues, Green \& Frugal Computing, Precision Tuning \& Scientific Data~Compression.\\

\noindent {\bf Y Seladji} received PhD degree in computer science from Ecole Polytechnique, France. Currently, she is associate professor in the University of Tlemcen. Her research interests include Formal Methods, Static Analysis, Embedded Systems.\\

\end{document}